\newif \iftheory 
\newif \ifblind
\newcommand{\authorlist}{
Megumi Ando\iftheory\thanks{MITRE, {\tt mando@mitre.org}}\fi
\and 
Anna Lysyanskaya\iftheory\thanks{Computer Science Department, Brown University, {\tt anna@cs.brown.edu}}\fi
\and 
Eli Upfal\iftheory\thanks{Computer Science Department, Brown University, {\tt eli@cs.brown.edu}}\fi
}
\newcommand{\institutelist}{Computer Science Department, Brown University\\Providence, RI 02912 USA\\\tt{megumi\_ando@alumni.brown.edu}, \tt{\{anna, eli\}@cs.brown.edu}}
\newcommand{\titlelist}{\iftheory\begin{bf}\fi
Practical and Provably Secure Onion Routing
\iftheory\end{bf}\fi}
\newcommand{\pathstyles}{stylefiles}
\newcommand{\myparagraph}[1]{\paragraph{#1.}}
\newcommand{\depth}{L}
\newcommand{\blowup}{\Gamma}
\newcommand{\corruptions}{\kappa}
\newcommand{\nprotocol}{\Pi_n}
\newcommand{\pprotocol}{\Pi_p}
\newcommand{\aprotocol}{\Pi_a}
\institute{}
\institute{\institutelist}
\author{}
\author{\authorlist} 
\title{\titlelist}
\setlist{nosep}
\newtheorem{lemma}{Lemma}
\newtheorem{theorem}{Theorem}
\newtheorem{definition}{Definition}
\newtheoremstyle{TheoremNum}
        {\topsep}{\topsep}    
        {\itshape}                 
        {}                              
        {\bfseries}                
        {.}                             
        { }                             
        {\thmname{#1}\thmnote{ \bfseries #3}}
\renewcommand{\secpar}{\lambda}
\renewcommand{\secparam}{1^\lambda}
\begin{document}

\maketitle
\iftheory
\thispagestyle{empty}
\fi

\begin{abstract}
In an onion routing protocol, messages travel through several intermediaries before arriving at their destinations; they are wrapped in layers of encryption (hence they are called ``onions'').  The goal is to make it hard to establish who sent the message.  It is a practical and widespread tool for creating anonymous channels.  

\iftheory\else~~~~\fi For the standard adversary models --- network, passive, and active --- we present practical and provably secure onion routing protocols.  Akin to Tor, in our protocols each party independently chooses the routing paths for his onions.  For security parameter $\secpar$, our differentially private solution for the active adversary takes $\bigO{\log^2\secpar}$ rounds and requires every participant to transmit  $\bigO{\log^{4} \secpar}$ onions in every round. 

\iftheory
\vfill
\noindent\textbf{Keywords:} 
Keyword, 
keyword, 
keyword.
\fi
\end{abstract}

\iftheory
\newpage
\thispagestyle{empty}
\tableofcontents
\newpage
\setcounter{page}{1}
\fi

\section{Introduction}
Anonymous channels are a prerequisite for protecting user privacy.  But how do we achieve anonymous channels in an Internet-like network that consists of point-to-point links?  

If a user Alice wishes to send a message $m$ to a user Bob, she may begin by encrypting her message $m$ under Bob's public key to obtain the ciphertext $c_{\mathit{Bob}} = \enc(\pk_{\mathit{Bob}},m)$.  But sending $c_{\mathit{Bob}}$ directly to Bob would allow an eavesdropper to observe that Alice is in communication with Bob.  So instead, Alice may designate several intermediate relays, called ``mix-nodes'' (typically chosen at random) and send the ciphertext through them, ``wrapped'' in several layers of encryption so that the ciphertext received by a mix-node cannot be linked to the ciphertext sent out by the mix-node. Each node decrypts each ciphertext it receives (``peels off'' a layer of encryption) and discovers the identity of the next node and the ciphertext to send along. This approach to hiding who is talking to whom is called ``onion routing''~\cite{Chaum81} (sometimes it is also called ``anonymous remailer''~\cite{SP:DanDinMat03}) because the ciphertexts are layered, akin to onions; from now on we will refer to such ciphertexts as ``onions''.  

Onion routing is attractive for several reasons: (1)~simplicity: users and developers understand how it works; the only cryptographic tool it uses is encryption; (2)~fault-tolerance: it can easily tolerate and adapt to the failure of a subset of mix-nodes; (3)~scalability: its performance remains the same even as more and more users and mix-nodes are added to the system.
As a result, onion routing is what people use to obscure their online activities. According to current statistics published by the Tor Project, Inc., Tor is used by millions of users every day to add privacy to their communications~\cite{Tor, DM05}\footnote{Tor stands for ``the onion router'', and even though the underlying mechanics are somewhat different from what we described above (instead of using public-key encryption, participants carry out key exchange so that the rest of the communication can be more efficient), the underlying theory is still the same.}.   

In spite of its attractiveness and widespread use, the security of onion routing is not well-understood.  

The definitional question --- what notion of security do we want to achieve? --- has been studied~\cite{SW06,FC:FeiJohSyv07,FJS,EPRINT:BKMMM14}. The most desirable notion, which we will refer to as ``statistical privacy'', requires that the adversary's view in the protocol be distributed statistically independently of who is trying to send messages to whom\footnote{Technically, since onion routing uses encryption, the adversary's view cannot be statistically independent of the input, but at best computationally independent.  However, as we will see, if we work in an idealized encryption model, such as in Canetti's $\mathcal{F}_{\enc}$-hybrid model~\cite{FOCS:Canetti01}, statistical privacy makes sense.}. Unfortunately, a network adversary observing the traffic flowing out of Alice and flowing into Bob can already make inferences about whether Alice is talking to Bob.  For example, if the adversary knows that Alice is sending a movie to someone, but there isn't enough traffic flowing into Bob's computer to suggest that Bob is receiving a movie, then Bob cannot be Alice's interlocutor. (Participants' inputs may also affect others' privacy in other ways~\cite{FJS}.)

So let us consider the setting in which, in principle, statistical privacy can be achieved: every party wants to anonymously send and receive just one short message to and from some other party. Let us call this ``the simple input-output (I/O) setting''.  In the simple I/O setting, anonymity can be achieved even against an adversary who can observe the entire network if there is a trusted party through whom all messages are routed. Can onion routing that does not rely on one trusted party emulate such a trusted party in the presence of a powerful adversary?  

Specifically, we may be dealing with \textit{the network adversary} that observes all network traffic; or the stronger \textit{passive adversary} that, in addition to observing network traffic, also observes the internal states of a fraction of the network nodes; or the most realistic \textit{active adversary} that observes network traffic and also controls a fraction of the nodes. Prior work analyzing Tor~\cite{FC:FeiJohSyv07,FJS,EPRINT:BKMMM14} did not consider these standard adversary models. Instead, they focused on the adversary who was entirely absent from some regions of the network, but resourceful adversaries (such as the NSA) and adversaries running sophisticated attacks (such as BGP hijacking~\cite{SP:SEFCM17}) may receive the full view of the network traffic, and may also infiltrate the collection of mix-nodes. 

Surprisingly, despite its real-world importance, we were the first to consider this question. 

\myparagraph{Warm-up} An \emph{oblivious permutation algorithm} between a memory-constrained client and an untrusted storage server enables the client to permute a sequence of (encrypted) data blocks stored on the server without the server learning anything (in the statistical sense) about the permutation.  
\begin{theorem}
Any oblivious permutation algorithms can be adapted into a communications protocol for achieving statistical privacy from the network adversary. 
\end{theorem}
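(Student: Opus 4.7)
The plan is to identify the secret sender-to-recipient pairing $\pi$ with the secret permutation hidden in an oblivious permutation algorithm, so that obliviousness of the latter immediately yields network-adversary statistical privacy. In the simple I/O setting there are $n$ senders, each with one message for one receiver; by relabeling we may assume sender $i$ wants to reach receiver $\pi(i)$ for some permutation $\pi$. First I would have each sender encrypt her message under her intended recipient's public key, producing a ciphertext $c_i$. These $c_i$'s will be the initial ``data blocks'' placed on the storage server in sender order; the protocol's goal becomes to output them in the order $c_{\pi^{-1}(1)}, \ldots, c_{\pi^{-1}(n)}$ so that block $j$ can be handed to receiver $j$.

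To realize this in the network, I would designate one participant to play the memory-constrained client and one (or a few) to play the untrusted storage server; against a purely network adversary this is harmless, since internal state is never observed. Each sender secretly transmits $\pi(i)$ to the client under its public key and deposits $c_i$ at the server. The client then runs the oblivious permutation algorithm for $\pi$, each of its read/write requests becoming a point-to-point message to the server; at the end, the server forwards block $j$ to receiver $j$. The network adversary's view therefore decomposes into (i)~the initial uploads, (ii)~the full client--server transcript, and (iii)~the final shipment to the publicly known receivers. Parts (i) and (iii) are manifestly $\pi$-independent (one ciphertext per sender and a fixed-length output list, respectively), while part (ii) is exactly the storage server's view in the oblivious permutation algorithm, which by hypothesis is statistically independent of $\pi$. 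Combining this with semantic security of the encryption (statistical in Canetti's $\mathcal{F}_{\enc}$-hybrid model) yields statistical privacy.

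The main obstacle is not a calculation but rather the bookkeeping needed to show that the auxiliary channels surrounding the oblivious-permutation invocation leak nothing extra. One must verify that the senders' private uploads of $\pi(i)$ to the client, and the final shipments from the server to the receivers, have traffic patterns --- counts, sizes, and schedules --- that are themselves $\pi$-independent; uniform padding together with a fixed round structure should suffice. Once these interfaces are pinned down, the remainder of the proof is a direct reduction from oblivious-permutation security to network-statistical privacy in the simple I/O setting.
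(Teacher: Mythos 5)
Your proof is formally valid for the theorem as literally stated, but it takes a fundamentally different and much more degenerate route than the paper's. You nominate one participant to be \emph{the} client who learns the entire permutation $\pi$ and then drives an unmodified oblivious-permutation execution against a designated storage server. Against a network-only adversary this is technically harmless, but it collapses precisely to the ``trusted party through whom all messages are routed'' solution that the paper names in its introduction as the thing onion routing is supposed to emulate \emph{without} relying on one trusted party. In particular, your construction requires an extra round in which every sender discloses $\pi(i)$ to the client --- a step the paper never performs and which would immediately break privacy against any stronger adversary who can corrupt that client. The paper's adaptation (Appendix~\ref{sec:pi1}) goes the other way: it \emph{distributes} the client role. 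Each user independently constructs an onion whose layered routing information carries its own piece of the permutation; the client's ``read a block, permute locally, write it back'' step becomes a mix-node's ``receive onions, peel a layer, shuffle, and forward,'' and the server's obliviousness guarantee is replaced by padding server-to-server packets to a fixed size ($\alpha \log \sparam$ or $(1+d)\alpha \log^2 \sparam$) so that the observable traffic volume on every link is input-independent. No single party ever holds $\pi$. This distributed reinterpretation is the actual content of the ``adaptation'' the theorem is advertising and is the launching point for the passive- and active-adversary protocols that follow; the centralized shortcut proves the literal statement but does not extend. One smaller caveat in your argument: you implicitly assume the oblivious-permutation definition already fixes message sizes and scheduling, not just the sequence of accessed addresses --- the paper takes care of this explicitly by packet padding and a synchronous round structure, and you would need to do the same rather than leave it to the primitive.
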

As an example, Ohrimenko et al.~\cite{ICALP:OGTU14} presented a family of efficient oblivious permutation algorithms. This can be adapted into a secure and ``tunable'' OR protocol that can trade off between low server load and latency. Letting $\secpar$ denote the security parameter, for any $B \in \left[\frac{\sqrt{N}}{\log^2 \secpar}\right]$, this protocol can be set to run in $\bigO{\frac{\log N}{\log B}}$~rounds with communication complexity overhead~$\bigO{\frac{B\log N \log^2 \secpar}{\log B}}$ and server load~$\bigO{B\log^2 \secpar}$. (See Appendix~\ref{sec:pi1}.)  

\myparagraph{Our result for the passive adversary setting} 
However, to be secure from the passive adversary, we need more resources. We prove for the first time that onion routing can provide statistical privacy from the passive adversary, while being efficient.  
\begin{enumerate}
\item We prove that our solution, $\pprotocol$, is statistically private from any passive adversary capable of monitoring any constant $\corruptions \in [0, 1)$ of the mix-nodes, while having communication complexity overhead~$\bigO{\log^{2} \secpar}$, server load~$\bigO{\log^{2} \secpar}$, and latency~$\bigO{\log^{2} \secpar}$, where $\secpar$ denotes the security parameter. (See Section~\ref{sec:pi2}.) 
\end{enumerate}

\myparagraph{Our result for the active adversary setting} 
However, for most realistic input settings (not constrained to the simple I/O setting), statistical privacy is too ambitious a goal. It is not attainable even with a trusted third party.  Following recent literature~\cite{EPRINT:BKMMM14, vdHLZZ15}, for our final result, let us \textit{not} restrict users' inputs, and settle for a weaker notion of privacy, namely, differential privacy. 

Our definition of differential privacy requires that the difference between the adversary's view when Alice sends a message to Bob and its view when she does not send a message at all or sends it to Carol instead, is small. This is meaningful; showing that the protocol achieves differential privacy gives every user a guarantee that sending her message through does not change the adversary's observations very much. 
 
\begin{enumerate}
\setcounter{enumi}{1}
\item Our solution, $\aprotocol$, can defend against the active adversary while having communication complexity overhead~$\bigO{\log^{6} \secpar}$, server load~$\bigO{\log^{4} \secpar}$, and latency~$\bigO{\log^2 \secpar}$. This is the first provably secure peer-to-peer solution that also provides a level of robustness; unless the adversary forces the honest players to abort the protocol run, most messages that are not dropped by the adversary are delivered to their final destinations. (See Section~\ref{sec:pi3}.) 
\end{enumerate}

To prepare onions, we use a cryptographic scheme that is strong enough that, effectively, the only thing that the active adversary can do with onions generated by honest parties is to drop them (see the onion cryptosystem by Camenisch and Lysyanskaya~\cite{C:CamLys05} for an example of a sufficiently strong cryptosystem).  Unfortunately, even with such a scheme, it is still tricky to protect Alice's privacy against an adversary that targets Alice specifically. Suppose that an adversarial Bob is expecting a message of a particular form from an anonymous interlocutor, and wants to figure out if it was Alice or not.  If the adversary succeeds in blocking all of Alice's onions and not too many of the onions from other parties, and then Bob never receives the expected message, then the adversary's hunch that it was Alice will be confirmed. 

How do we prevent this attack?  For this attack to work, the adversary would have to drop a large number of onions --- there is enough cover traffic in our protocol that dropping just a few onions does not do much.  But once a large enough number of onions is dropped, the honest mix-nodes will detect that an attack is taking place, and will shut down before any onions are delivered to their destinations. Specifically, if enough onions survive half of the rounds, then privacy is guaranteed through having sufficient cover; otherwise, privacy is guaranteed because \emph{no} message reaches its final destination with overwhelming probability.  So the adversary does not learn anything about the destination of Alice's onions.

In order to make it possible for the mix-nodes to detect that an attack is taking place, our honest users create ``checkpoint'' onions.  These onions don't carry any messages; instead, they are designed to be ``verified'' by a particular mix-node in a particular round.  These checkpoint onions are expected by the mix-node, so if one of them does not arrive, the mix-node in question realizes that something is wrong.  If enough checkpoint onions are missing, the mix-node determines that an attack is underway and shuts down.  Two different users, Alice and Allison, use a PRF with a shared key (this shared key need not be pre-computed, but can instead be derived from a discrete-log based public-key infrastructure under the decisional Diffie-Hellman assumption; see Appendix~\ref{sec:reallife}) in order to determine whether Alice should create a checkpoint onion that will mirror Allison's checkpoint onion.

\myparagraph{Related work} 
Encryption schemes that are appropriate for onion routing are known~\cite{C:CamLys05,BGKM}. Several papers attempted to define anonymity for communications protocols and to analyze Tor~\cite{SW06,FC:FeiJohSyv07,FJS}. Backes et al.~\cite{EPRINT:BKMMM14} were the first to consider a notion inspired by differential privacy~\cite{tcc:dmns06} but, in analyzing Tor, they assume an adversary with only a partial view of the network. There are also some studies on anonymity protocols, other than onion routing protocols, that were analyzed using information-theoretic measures~\cite{FC:BerFiaTaS04,CCS:KopBas07,ChaPalPan08,EC:DodReySmi04,AlvAndCha+12}. In contrast, all the protocols presented in this paper are provably secure against powerful adversaries that can observe all network traffic. 
The system, Vuvuzela~\cite{vdHLZZ15}, assumes that all messages travel through the same set of dedicated servers and is, therefore, impractical compared to Tor. Recently proposed systems, Stadium~\cite{TGL+17} and Atom~\cite{KCGDF17} are distributed but not robust; they rely on verifiable shuffling to detect and abort. A variant of Atom is robust at a cost in security; it only achieves $k$-anonymity~\cite{KCGDF17}. In contrast, our solution for the active adversary is distributed while maintaining low latency, and robust while being provably secure. 

Achieving anonymous channels using heavier cryptographic machinery has been considered also.  
One of the earliest examples is Chaum's dining cryptographer's protocol~\cite{JC:Chaum88}. Rackoff and Simon~\cite{STOC:RacSim93} use secure multiparty computation for providing security from active adversaries. 
Other cryptographic tools used in constructing anonymity protocols include oblivious RAM (ORAM) and private information retrieval (PIR)~\cite{SP:CooBir95, SP:CorBonMaz15}. Corrigan-Gibbs et al.'s Riposte solution makes use of a global bulletin board and has a latency of a couple of days~\cite{SP:CorBonMaz15}. The aforementioned Stadium~\cite{KCGDF17} is another solution for a public forum. Blaze et al.~\cite{BIK+} provided an anonymity protocol in the wireless (rather than point-to-point) setting.

\section{Preliminaries}
\myparagraph{Notation}
By the notation~$[n]$, we mean the set~$\{1,\dots, n\}$ of integers. The output~$a$ of an algorithm~$A$ is denoted by $a \gets A$. For a set~$S$, we write $s \sample S$ to represent that $s$ is a uniformly random sample from the set~$S$ and $|S|$, to represent its cardinality. A realization~$d$ of a distribution~$D$ is denoted $d \sim D$; by $d \sim {\sf Binomial}(N, p)$, we mean that $d$ is a realization of a binomial random variable with parameters~$N$ and $p$. By $\log(n)$, we mean the logarithm of $n$, base $2$; and by $\ln(n)$, we mean the natural log of $n$. 

A function $f : \mathbb{N} \rightarrow \mathbb{R}$ is negligible in $\secpar$, written $f(\secpar) = \negl$, if for every polynomial~$p(\cdot)$ and all sufficiently large $\secpar$, $f(\secpar) < 1/p(\secpar)$. When $\secpar$ is the security parameter, we say that an event occurs with overwhelming probability if it is the complement of an event with probability negligible in $\secpar$. Two families of distributions~$\{D_{0, \secpar}\}_{\secpar \in\mathbb{N}}$ and $\{D_{1, \secpar}\}_{\secpar \in\mathbb{N}}$ are statistically close if the statistical distance between $D_{0, \secpar}$ and $D_{1, \secpar}$ is negligible in $\secpar$; we abbreviate this notion by $D_0 \approx_s D_1$ when the security parameter is clear by context. We use the standard notion of a pseudorandom function~\cite[Ch.~3.6]{Goldreich01}.

\myparagraph{Onion routing}
Following Camenisch and Lysyanskaya's work on cryptographic onions~\cite{C:CamLys05}, an \emph{onion routing scheme} is a triple of algorithms: 
\[
(\mathsf{Gen}, \mathsf{FormOnion}, \mathsf{ProcOnion}) .
\]
The algorithm, $\mathsf{Gen}$, generates a public-key infrastructure for a set of parties. The algorithm, $\mathsf{FormOnion}$, forms onions; and the algorithm, $\mathsf{ProcOnion}$, processes onions. 

Given a set $[N]$ of parties, for every $i\in[N]$, let $(\pk_i, \sk_i) \gets \mathsf{Gen}(\secparam)$ be the key pair generated for party~$i\in[N]$, where $\secpar$ denotes the security parameter. 

$\mathsf{FormOnion}$ takes as input: a message~$m$, an ordered list $(P_1, \dots, P_{\depth+1})$ of parties from~$[N]$, and the public-keys~$(\pk_{P_1}, \dots, \pk_{P_{\depth+1}})$ associated with these parties, and a list $(s_1, \dots, s_{\depth})$ of (possibly empty) strings that are nonces associated with layers of the onion. The party~$P_{\depth+1}$ is interpreted as the \emph{recipient} of the message, and the list~$(P_1, \dots, P_{\depth+1})$ is the \emph{routing path} of the message. The output of $\mathsf{FormOnion}$ is a sequence $(O_1, \dots, O_{\depth+1})$ of onions. Because it is convenient to think of an onion as a layered encryption object, where processing an onion $O_r$ produces the next onion $O_{r+1}$, we sometimes refer to the process of revealing the next layer of an onion as ``decrypting the onion'', or ``peeling the onion''. For every $r \in [\depth]$, only party~$P_r$ can peel onion~$O_r$ to reveal the next layer,  
\[
(P_{r+1}, O_{r+1}, s_{r+1}) \gets \mathsf{ProcOnion}(\sk_{P_r}, O_r, P_r) ,
\]
of the onion containing the ``peeled'' onion~$O_{r+1}$, the ``next destination''~$P_{r+1}$, and the nonce~$s_{r+1}$. Only the recipient $P_{\depth+1}$ can peel the innermost onion~$O_{\depth+1}$ to reveal the message,
\[
m \gets \mathsf{ProcOnion}(\sk_{P_{\depth+1}}, O_{\depth+1}, P_{\depth+1}) .
\]

Let $O_0$ be an onion formed from running $\mathsf{FormOnion}(m_0, P^0, \pk^0, s^0)$, and let $O_1$ be another onion formed from running $\mathsf{FormOnion}(m_1, P^1, \pk^1, s^1)$. Importantly, a party that can't peel either onion can't tell which input produced which onion. See Camenisch and Lysyanskaya's paper~\cite{C:CamLys05} for formal definitions.

In our protocols, a sender of a message~$m$ to a recipient~$j$ ``forms an onion'' by generating nonces and running the $\mathsf{FormOnion}$ algorithm on the message~$m$, a routing path~$(P_1, \dots, P_{\depth}, j)$, the public keys $(\pk_{P_1}, \dots, \pk_{P_{\depth}}, \pk_j)$ associated with the parties on the routing path, and the generated nonces; the ``formed onion'' is the first onion~$O_1$ from the list of outputted onions. The sender sends $O_1$ to the first party~$P_1$ on the routing path, who processes it and sends the peeled onion~$O_2$ to the next destination~$P_2$, and so on, until the last onion~$O_{\depth+1}$ is received by the recipient~$j$, who processes it to obtain the message~$m$.

\section{Definitions} \label{sec:defns}
We model the network as a graph with $N$ nodes, and we assume that these nodes are synchronized. This way, any onion can be sent from any sender to any receiver, and also its transmission occurs within a single round. 

Every participant is a user client, and some user clients also serve as mix-nodes. In all the definitions, the $N$ users participating in an communications protocol $\Pi$ are labeled $1, \dots, N$; and the number $N$ of users is assumed to be polynomially-bounded in the security parameter~$\secpar$. Every input to a protocol is an $N$-dimensional vector. When a protocol runs on input $\sigma = (\sigma_1, \dots, \sigma_N)$, it means that the protocol is instantiated with each user~$i$ receiving $\sigma_i$ as input. $\mathcal{M}$ denotes the (bounded) message space. A message pair~$(m, j)$ is properly formed if $m \in \mathcal{M}$ and $j \in [N]$. The input~$\sigma_i$ to each user~$i\in[N]$ is a collection of properly formed message pairs, where $(m, j) \in \sigma_i$ means that user~$i$ intends on sending message~$m$ to user~$j$. Let $M(\sigma)$ denote the ``messages in $\sigma$''. It is the multiset of all message pairs in $\sigma$, that is
\[
M(\sigma_1, \dots, \sigma_N) = \bigcup_{i = 1}^N  \left\{ (m, j) \in \sigma_i \right\} .
\]

For analyzing our solutions, it is helpful to first assume an idealized version of an encryption scheme, in which the ciphertexts are information-theoretically unrelated to the plaintexts that they encrypt and reveal nothing but the length of the plaintext.  Obviously, such encryption schemes do not exist computationally, but only in a hybrid model with an oracle that realizes an ideal encryption functionality, such as that of Canetti~\cite{FOCS:Canetti01}. When used in forming onions, such an encryption scheme gives rise to onions that are information-theoretically independent of their contents, destinations, and identities of the mix-nodes.  Our real-life proposal, of course, will use standard computationally secure encryption~\cite{DolDwoNao00}. We discuss the implications of this in Appendix~\ref{sec:reallife}. 

\myparagraph{Views and outputs}
We consider the following standard adversary models, in increasing order of capabilities:
\begin{enumerate}
\item {\bf Network adversary.} A network adversary can observe the bits flowing on every link of the network. (Note that if the peer-to-peer links are encrypted in an idealized sense, then the only information that the adversary can use is the volume flow.) 

\item {\bf Passive adversary.} In addition to the capabilities of a network adversary, a passive adversary can monitor the internal states and operations of a constant fraction of the parties. The adversary's choices for which parties to monitor are made non-adaptively over the course of the execution run. 

\item {\bf Active adversary.} In addition to the capabilities of a network adversary, an active adversary can corrupt a constant fraction of the parties. The adversary's choices for which parties to corrupt are made non-adaptively over the course of the execution run. The adversary can change the behavior of corrupted parties to deviate arbitrarily from the protocol. 
\end{enumerate}

Let $\Pi$ be a protocol, and let $\sigma$ be a vector of inputs to $\Pi$. Given an adversary~$\adv$, the view~$V^{\Pi, \adv}(\sigma)$ of $\adv$ is its observables from participating in $\Pi$ on input~$\sigma$ plus any randomness used to make its decisions. With idealized secure peer-to-peer links, the observables for a network adversary are the traffic volumes on all links; whereas for the passive and active adversaries, the observables additionally include the internal states and computations of all monitored\thinspace/\thinspace corrupted parties at all times. 

Given an adversary $\adv$, the output $O^{\Pi, \adv}(\sigma) = (O_1^{\Pi, \adv}(\sigma), \dots, O_N^{\Pi, \adv}(\sigma))$ of $\Pi$ on input $\sigma$ is a vector of outputs for the $N$ parties. 

\subsection{Privacy definitions}
How do we define security for an anonymous channel? The adversary's view also includes the internal states of corrupted parties. In such case, we may wish to protect the identities of honest senders from the recipients that are in cahoots with the adversary. However, even an ideal anonymous channel cannot prevent the contents of messages (including the volumes of messages) from providing a clue on who sent the messages; thus any ``message content'' leakage should be outside the purview of an anonymous channel. To that end, we say that an communications protocol is secure if it is difficult for the adversary to learn who is communicating with whom, beyond what leaks from captured messages. 

Below, we provide two flavors of this security notion; we will prove that our constructions achieve either statistical privacy or $(\epsilon, \delta)$-differential privacy~\cite[Defn.~2.4]{Dworkbook} in the idealized encryption setting. 

\begin{definition} [Statistical privacy]
Let $\Sigma^{*}$ be the input set consisting of every input of the form 
\[
\sigma = (\{ (m_1, \pi(1))\}, \dots, \{ (m_N, \pi(N))\}) ,
\]
where $m_1, \dots, m_N \in \mathcal{M}$, and $\pi : [N] \mapsto [N]$ is any permutation function over the set $[N]$. 
A communications protocol~$\Pi$ is \emph{statistically private} from the adversaries in the class~$\mathbb{A}$ if for all $\adv \in \mathbb{A}$ and for all $\sigma_0, \sigma_1 \in \Sigma^{*}$ that differ only on the honest parties' inputs and outputs, the adversary's views $V^{\Pi, \adv}(\sigma_0)$ and $V^{\Pi, \adv}(\sigma_1)$ are statistically indistinguishable, i.e., 
\[
\Delta(V^{\Pi, \adv}(\sigma_0), V^{\Pi, \adv}(\sigma_{1})) = \negl ,
\]
where $\secpar \in \mathbb{N}$ denotes the security parameter, and $\Delta(\cdot, \cdot)$ denotes statistical distance (i.e., total variation distance). $\Pi$ is perfectly secure if the statistical distance is zero instead. 
\end{definition}

\begin{definition} [Distance between inputs]
The \emph{distance} between two inputs $\sigma_0 = (\sigma_{0, 1}, \dots, \sigma_{0, N})$ and $\sigma_1 = (\sigma_{1, 1}, \dots, \sigma_{1, N})$, denoted $d(\sigma_0, \sigma_1)$, is given by
\[
d(\sigma_0, \sigma_1) \stackrel{\text{def}}{=} \sum_{i=1}^N |\sigma_{0, i} \nabla \sigma_{1, i}| ,
\]
where $(\cdot \nabla \cdot)$ denotes the symmetric difference. 
\end{definition}

\begin{definition} [Neighboring inputs] 
Two inputs $\sigma_0$ and $\sigma_1$ are \emph{neighboring} if $d(\sigma_0, \sigma_1) \le 1$. 
\end{definition}
 
\begin{definition} [$(\epsilon, \delta)$-DP~{\cite[Defn.~2.4]{Dworkbook}}] 
Let $\Sigma$ be the set of all valid inputs. A communications protocol is \emph{$(\epsilon, \delta)$-DP} from the adversaries in the class~$\mathbb{A}$ if for all $\adv\in\mathbb{A}$, for every pair of neighboring inputs $\sigma_0, \sigma_1 \in \Sigma$ that differ only on an honest party's input and an honest party's output, and any set $\mathcal{V}$ of views,
\[
\prob{V^{\Pi, \adv}(\sigma_{0}) \in \mathcal{V}} \le e^\epsilon \cdot \prob{V^{\Pi, \adv}(\sigma_{1}) \in \mathcal{V}} + \delta .
\]
\end{definition}
While differential privacy is defined with respect to neighboring inputs, it also provides (albeit weaker) guarantees for non-neighboring inputs; it is known that the security parameters degrade proportionally in the distance between the inputs~\cite{Dworkbook}.

\subsection{Other performance metrics}
Since message delivery cannot be guaranteed in the presence of an active adversary, we define correctness with respect to \emph{passive} adversaries. 
\begin{definition} [Correctness]
A communications protocol $\Pi$ is \emph{correct} on an input~$\sigma\in\Sigma$ if for any passive adversary $\adv$, and for every recipient~$j \in [N]$, the output $O_j^{\Pi, \adv}(\sigma)$ corresponds to the multiset of all messages for recipient~$j$ in the input vector~$\sigma$. That is,  
\[
O_j^{\Pi, \adv}(\sigma) = \left\{ m \left| (m, j) \in M(\sigma) \right. \right\} , 
\]
where $M(\sigma)$ denotes the multiset of all messages in $\sigma$. 
\end{definition}

\myparagraph{Efficiency of OR protocols} 
The communication complexity blow-up of an onion routing (OR) protocol measures how many more onion transmissions are required by the protocol, compared with transmitting the messages in onions directly from the senders to the recipients (without passing through intermediaries). We assume that every message~$m\in\mathcal{M}$ in the message space~$\mathcal{M}$ ``fits'' into a single onion. The communication complexity is measured in unit onions, which is appropriate when the parties pass primarily onions to each other. 
\begin{definition} [Communication complexity blow-up]
The \emph{communication complexity blow-up} of an OR protocol~$\Pi$ is defined with respect to an input vector~$\sigma$ and an adversary~$\adv$. Denoted~$\gamma^{\Pi, \adv}(\sigma)$, it is the expected ratio between the total number~$\blowup^{\Pi, \adv}(\sigma)$ of onions transmitted in protocol~$\Pi$ and the total number~$|M(\sigma)|$ of messages in the input vector. That is, 
\[
\gamma^{\Pi, \adv}(\sigma) \stackrel{\text{def}}{=}  \mathbb{E} \left[ \frac{\blowup^{\Pi, \adv}(\sigma)}{\left| M(\sigma) \right|} \right]  .
\]
\end{definition}

\begin{definition} [Server load]
The \emph{server load} of an OR protocol~$\Pi$ is defined with respect to an input vector~$\sigma$ and an adversary~$\adv$. It is the expected number of onions processed by a single party in a round. 
\end{definition}

\begin{definition} [Latency]
The \emph{latency} of an OR protocol~$\Pi$ is defined with respect to an input vector~$\sigma$ and an adversary~$\adv$. It is the expected number of rounds in a protocol execution. 
\end{definition}

In addition to having low (i.e., polylog in the security parameter) communication complexity blow-ups, we will show that our OR~protocols have low (i.e., polylog in the security parameter) server load and low (i.e., polylog in the security parameter) latency.

\section{The passive adversary} \label{sec:pi2}
Communication patterns can trivially be hidden by sending every message to every participant in the network, but this solution is not scalable as it requires a communication complexity blow-up that is linear in the number of participants. Here, we prove that an OR protocol can provide anonymity from the passive adversary while being practical with low communication complexity and low server load. 

To do this, every user must send and receive the same number of messages as any other user; otherwise, the sender-receiver relation can leak from the differing volumes of messages sent and received by the users. In other words, every user essentially commits to sending a message, be it the empty message $\bot$ to itself. Let $\Sigma^{*}$ be the set of all input vectors of the form 
\[
\sigma = (\{ (m_1, \pi(1))\}, \dots, \{ (m_N, \pi(N))\}),
\]
where $m_1, \dots, m_N$ are any messages from the message space $\mathcal{M}$, and $\pi : [N] \rightarrow [N]$ is any permutation function over the set $[N]$; our solution, $\pprotocol$, is presented in the setting where the input vector is constrained to $\Sigma^{*}$. 

\sloppy Let $[N]$ be the set of users, and $\mathcal{S} = \{S_1, \dots, S_{n}\} \subset [N]$ the set of servers. $\pprotocol$ uses a secure onion routing scheme, denoted by $\mathcal{OR} = (\mathsf{Gen}, \mathsf{FormOnion}, \mathsf{ProcOnion})$, as a primitive building block. For every $i\in[N]$, let $(\pk_i, \sk_i) \gets \mathsf{Gen}(\secparam)$ be the key pair generated for party~$i$, where $\secpar$ denotes the security parameter. 

During a setup phase, each user~$i\in[N]$ creates an onion. On input~$\sigma_i = \{(m, j)\}$, user~$i$ first picks a sequence $T_1, \dots, T_\depth$ servers, where each server is chosen independently and uniformly at random, and then forms an onion from the message~$m$, the routing path~$(T_1, \dots, T_\depth, j)$, the public keys $(\pk_{T_1}, \dots, \pk_{T_\depth}, \pk_j)$ associated with the parties on the routing path, and a list of empty nonces. At the first round of the protocol run, user~$i$ sends the formed onion to the first hop~$T_1$ on the routing path. 

After every round~$i\in[L]$ (but before round~$i+1$) of the protocol run, each server processes the onions it received at round~$i$. At round~$i+1$, the resulting peeled onions are sent to their respective next destinations in random order. At round~$\depth+1$, every user receives an onion and processes it to reveal a message.  

\myparagraph{Correctness and efficiency} Clearly, $\pprotocol$ is correct. In $\pprotocol$, $N$ messages are transmitted in each of the $\depth+1$ rounds of the protocol run. Thus, the communication complexity blow-up and the latency are both $\depth+1$. The server load is~$\frac{N}{n}$. 

\myparagraph{Privacy}
To prove that $\pprotocol$ is statistically private from the passive adversary, we first prove that it is secure from the network adversary.    

\begin{theorem} \label{thm-random} 
$\pprotocol$ is statistically private from the network adversary when $\frac{N}{n} = \bigOmega{\log^{2} \secpar}$, and $\depth = \bigOmega{\log^{2} \secpar}$, where $\secpar \in \mathbb{N}$ denotes the security parameter.  
\end{theorem}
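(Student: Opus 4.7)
The plan is to show that the adversary's view, which under idealized encryption is just the sequence of per-round edge counts $V_1, \dots, V_{\depth+1}$, has essentially the same distribution under any two inputs in $\Sigma^*$. First I would observe that the first $\depth$ rounds' traffic is driven entirely by the random server choices $T_r^{(i)}$, which are drawn independently and uniformly and are unaffected by the destinations $\pi(i)$; so the joint distribution of $(V_1, \dots, V_\depth)$ is identical for every $\sigma \in \Sigma^*$. Only the last round's traffic $V_{\depth+1}$ depends on $\pi$: writing $Z_i = T_\depth^{(i)}$ for the round-$\depth$ server holding the onion sent by user $i$, $V_{\depth+1}$ is exactly the $\{0,1\}$-bipartite matrix whose entry at $(s,j)$ equals $1$ iff $Z_{\pi^{-1}(j)} = s$. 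So the view under $\sigma$ is determined by $V_{1:\depth}$ together with the composition $Z \circ \pi^{-1}$.

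Next I would fix a typical realization of $V_{1:\depth}$ and compare the conditional distributions of $Z \circ \pi_0^{-1}$ and $Z \circ \pi_1^{-1}$. The key claim is that, conditional on $V_{1:\depth}$, the random vector $Z = (Z_1, \dots, Z_N)$ is negligibly close in statistical distance to an exchangeable distribution on the observed multiset of round-$\depth$ positions; equivalently, after $\depth = \Omega(\log^2 \sparam)$ mixing rounds, the posterior probability that the onion of a given sender sits at a given server depends only negligibly on which sender is considered. Once this holds, the conditional law of $Z \circ \pi^{-1}$ given $V_{1:\depth}$ is, up to negligible error, invariant under $\pi$, so $V^{\pprotocol, \adv}(\sigma_0)$ and $V^{\pprotocol, \adv}(\sigma_1)$ agree in distribution up to a negligible error, and averaging over $V_{1:\depth}$ with the triangle inequality gives $\Delta\bigl(V^{\pprotocol, \adv}(\sigma_0), V^{\pprotocol, \adv}(\sigma_1)\bigr) = \mathbf{negl}(\sparam)$.

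The main obstacle is the exchangeability claim, and my intended route is a two-step concentration-plus-contraction argument. First, apply a Chernoff bound to the $N$ independent uniform next-hop choices made at each round to guarantee that the load $C_s^{(r)}$ at every server at every round lies in $\Theta(N/n) = \Theta(\log^2 \sparam)$ with overwhelming probability; a union bound over the $n \le N$ servers and $\depth = O(\log^2 \sparam)$ rounds keeps the failure probability negligible in $\sparam$. Second, on this high-probability event, analyze the conditional random walk obtained by observing only aggregate transition counts: each mixing round behaves like one layer of a random switching network, and because $\Omega(\log^2 \sparam)$ onions flow through every server, one can show that a single round contracts the total-variation distance between the posterior position-laws of any two distinct senders by a factor bounded away from $1$, so $\Omega(\log^2 \sparam)$ rounds drive the distance to $\mathbf{negl}(\sparam)$. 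The delicate point will be handling traces in which some server happens to have atypically low load and therefore reveals too much linking information; I would isolate those as a bad event of negligible probability and absorb the induced error into the final statistical-distance bound.
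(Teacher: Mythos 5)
Your high-level decomposition is sound and closely tracks the paper's own approach: the first $\depth$ rounds' edge counts are distribution-identical across inputs in $\Sigma^*$, a Chernoff concentration argument (Lemma~\ref{lem-mix}) ensures every server holds $\Theta(\log^2\sparam)$ onions per round with overwhelming probability, and what remains is to argue that $\Omega(\log^2\sparam)$ rounds of mixing through such servers unlinks each onion from its sender before the delivery round.

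The first genuine gap is that you leave the contraction step as an unsupported assertion: ``one can show that a single round contracts the total-variation distance between the posterior position-laws of any two distinct senders by a factor bounded away from $1$.'' This is precisely the technical heart of the theorem, and it is what the paper actually proves. Fix a sender $U$, let $X_i^r$ be the posterior probability that $U$'s onion sits in bin $i$ after round $r$, sort the bins by this likelihood, and partition them into three equal-size groups. By Lemma~\ref{lem-mix}, every bin at round $r+1$ receives roughly $\tfrac{\alpha}{3}\log^2\sparam$ balls from each group while each source bin held roughly $\alpha\log^2\sparam$ balls, so the most likely bin at round $r+1$ has posterior mass at most $\tfrac{(1+d)^2}{3}\sum_{j=1}^{3} X_{\text{top of group } j}^{r}$, and symmetrically for the least likely. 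Telescoping across group boundaries gives $g^{r+1}\le\tfrac{1}{2}g^r$ for the gap $g^r := X_1^r - X_n^r$. Without this, or a comparable quantitative mixing bound tied to the observed load constraints, the proposal is not yet a proof.

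The second issue is the word ``equivalently.'' Near-exchangeability of the joint vector $Z$ is strictly stronger than ``the posterior probability that a given sender's onion sits at a given server depends only negligibly on which sender.'' The latter concerns marginals, and equal marginals do not imply exchangeability: with four onions and two servers, the distribution putting mass $\tfrac12$ on each of $(Z_1,\dots,Z_4)=(1,1,2,2)$ and $(2,2,1,1)$ has identical per-coordinate marginals yet is far from exchangeable, and under the transposition of users $1$ and $3$ the induced last-round patterns $Z\circ\pi_0^{-1}$ and $Z\circ\pi_1^{-1}$ have disjoint support. So even with the per-sender mixing bound in hand, you still need an additional step (e.g., conditioning out the other onions' positions one at a time and re-applying the mixing bound to the residual process, or an explicit coupling) to promote marginal near-uniformity to the joint near-symmetry that your last-round comparison requires. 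The paper's own write-up also compresses this step, but your explicit ``equivalently'' asserts a false implication and should be replaced by an actual argument.
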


\begin{proof}
Let $U \in [N]$ be any target sender. 

Because the network adversary observes every link of the network, the adversary observes the first hop taken by $U$'s onion at the first round of the protocol execution and knows, with certainty, where $U$'s onion is at this point in time. Let $S$ be this location (i.e., server). At the next round, all the onions that were routed to $S$ at the previous round, emerge from $S$ and are routed to their next destinations, each, from the adversary's perspective, having an equal $\frac{1}{k}$ probability of being $U$'s onion, where $k$ denotes the total number of onions that were routed to $S$ at round $1$. Continuing with this analysis, from the adversary's perspective, which onion is $U$'s onion becomes progressively more uncertain with every round of the protocol. Below, we show that after $\depth$ rounds, the adversary's ``belief'' of which onion is $U$'s onion is statistically indistinguishable from the uniform distribution over the $N$ possible onions. 

Fix a round $i$. 

Order the onions (at round $i$) from most likely to be $U$'s onion (from the adversary's point of view) to the least likely to be $U$'s onion (from the adversary's point of view); let $O_1, \dots, O_N$ denote this sequence of onions. 

W.l.o.g., assume that $3$ evenly divides $N$. 

Let $G_1 = \{O_1, O_2, \dots, O_{N/3-1}\}$ be the one-third most likely onions, 
let $G_3 = \{O_{2N/3}, O_{2N/3+1},\dots, O_N\}$ be the one-third \emph{least likely} onions, and 
let $G_2 = \{O_{N/3}, O_{N/3+1}\dots, O_{2N/3-1}\}$ be the remaining one-third of the onions ``in the middle''. 

Let $X_1$ and $X_2$ denote the likelihoods of the most likely onion in $G_1$ and the least likely onion in $G_1$, resp.; 
let $X_3$ and $X_4$ denote the likelihoods of the most likely onion in $G_2$ and the least likely onion in $G_2$, resp.; and
let $X_5$ and $X_6$ denote the likelihoods of the most likely onion in $G_3$ and the least likely onion in $G_3$, resp. 
Clearly, $X_1 \ge X_2 \ge X_3 \ge X_4 \ge X_5 \ge X_6$.  

Let $o_1$ be an onion with the maximal likelihood to be $U$'s at the next round (in round $i+1$), and let $x_1$ denote its likelihood. 

Let $B$ be the set of all onions at the next round that are routed to the same bin as $o_1$, and let $\ell$ denote the size of $B$, i.e., $\ell = |B|$. 

Let $\mathbb{E}$ denote the server load (i.e., the average number of onions per server per round). 
Then, the expected number of onions that each group $G_j$ contributes to $B$ is $\frac{\mathbb{E}}{3} = \frac{N}{3n} = \bigOmega{\log^{2} \secpar}$. 
From Chernoff bounds for Poisson trials (see Lemma~\ref{lem-mix}b in Appendix~\ref{sec:proofs}), with overwhelming probability, each group $G_j$ contributes to $B$ a number of onions that is arbitrarily close to this expected number. Thus, for every constant $d > 0$, 
\begin{align*}
x_1 = \prob{o_1} &\le \frac{(1/3 + d) \ell X_1 + (1/3) \ell X_3 + (1/3 - d) \ell X_5}{\ell} \\
&= \left(\frac{1}{3} + d\right) X_1 + \left(\frac{1}{3}\right) X_3 + \left(\frac{1}{3} - d\right) X_5
\intertext{\indent Let $x_6$ be the likelihood of the least likely onion $o_N$ in the next round. Following a similar argument as above, for every constant $d > 0$, }
x_6 = \prob{o_N} &\ge \frac{(1/3 - d) \ell X_2 + (1/3) \ell X_4 + (1/3 + d) \ell X_6}{\ell} \\
&= \left(\frac{1}{3} - d\right) X_2 + \left(\frac{1}{3}\right) X_4 + \left(\frac{1}{3} + d\right) X_6
\end{align*}

It follows that the gap $g$ between $x_i$ and $x_6$ can be bounded as follows: For every constant $d > 0$, 
\begin{align*}
g = x_1 - x_6 &\le \left(\left(\frac{1}{3} + d\right) X_1 + \left(\frac{1}{3}\right) X_3 + \left(\frac{1}{3} - d\right) X_5\right) \\
& \hspace{12mm} - \left(\left(\frac{1}{3} - d\right) X_2 + \left(\frac{1}{3}\right) X_4 + \left(\frac{1}{3} + d\right) X_6\right) \\
&= \frac{1}{3}(X_1 - X_2 + X_3 - X_4 + X_5 - X_6) + d(X_1 - X_6) + d(X_2 - X_5) \\
&\le \frac{1}{3}(X_1 - X_2 + X_3 - X_4 + X_5 - X_6) + 2d(X_1 - X_6) \\
&\le \left(\frac{1}{3} + 2d\right) (X_1 - X_6)
\intertext{In particular, letting $d = \frac{1}{12}$, }
g &\le \frac{1}{2}(X_1 - X_6)
\end{align*}

The gap between the most likely onion and the least likely onion is at least halved in every round. Thus, after a poly-logarithmic number of rounds, the gap is negligibly small.

In the proof above, the onions were partitioned into three groups at every round. By partitioning the onions into an appropriately large constant number of groups, we can show that $\pprotocol$ achieves statistical privacy after $\depth = \bigOmega{1}\log^{2} \secpar$ rounds.  
\end{proof}

We are now ready to prove the main result of this section: 
\begin{theorem} \label{lem:passive}
$\pprotocol$ is statistically private from the passive adversary capable of monitoring any constant fraction~$\corruptions\in[0, 1)$ of the servers when $\frac{N}{n} = \bigOmega{\log^{2} \secpar}$, and $\depth = \bigOmega{\log^{2} \secpar}$, where $\secpar \in \mathbb{N}$ denotes the security parameter. 
\end{theorem}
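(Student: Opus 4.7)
The plan is to reduce to the proof of Theorem~\ref{thm-random} by observing that, from a passive adversary's perspective, monitored servers contribute no mixing while honest servers mix exactly as they did for the network adversary. Since $\corruptions \in [0, 1)$ is constant and $\depth = \Omega(\log^{2} \sparam)$, every onion's routing path still contains $\Omega(\log^{2} \sparam)$ hops at honest servers with overwhelming probability, which should suffice to carry the mixing conclusion through.

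First, I would formalize what extra information a passive adversary gains at a monitored server. Since it sees each monitored server's internal state and operations, it learns the bijective correspondence between the incoming onions and the outgoing peeled onions at that server. Viewing the adversary's uncertainty about $U$'s onion as a belief distribution over the current batch of onions, the belief thus passes through a monitored round without any smoothing (it is merely relabeled), while at an honest round $T_r$ the belief still spreads uniformly over the onions leaving $T_r$, exactly as in the proof of Theorem~\ref{thm-random}.

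Next, I would check the balls-in-bins invariants on the honest subset. Because $T_1, \dots, T_\depth$ for every onion are drawn uniformly and independently from $[n]$, and this is independent of the non-adaptive choice of which $\corruptions n$ servers are monitored, each honest server still receives $N/n = \alpha \log^{2} \sparam$ onions in expectation per round. Hence Lemma~\ref{lem-mix}(a,b) applies verbatim to the honest subset $H \subseteq [n]$ of bins, and a Chernoff bound on the independent Bernoulli$(1-\corruptions)$ indicators that each of $U$'s hops lands in $H$ shows that at least $(1-\corruptions)\depth/2 = \Omega(\log^{2} \sparam)$ of them do so with overwhelming probability.

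The main step, and the obstacle, is then to rerun the group-partitioning and telescoping argument from Theorem~\ref{thm-random} charging a halving of the gap $g^r$ only at honest rounds. The subtle point is that a later monitored hop can re-correlate onions whose identities were hidden by earlier honest mixes, so naively the belief could sharpen rather than merely stay flat across a monitored round. The cleanest way I would handle this is to hand the adversary, for free, the entire incoming-to-outgoing mapping at every monitored server at every round; the actual passive view is a function of this enhanced view, so statistical privacy against the enhanced adversary implies statistical privacy against the passive adversary. Relative to the enhanced adversary, the state of $U$'s onion is exactly the random walk from Theorem~\ref{thm-random} restricted to the sub-chain of honest hops, with monitored rounds acting as deterministic renamings. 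Applying the same partition-into-thirds argument on $H$ then gives $g^{r+1} \le g^r/2$ at each honest round, so after $\Omega(\log^{2} \sparam)$ honest rounds $g^\depth = {\bf negl}(\sparam)$. A union bound over all target senders, combined with the fact that $\sigma_0, \sigma_1 \in \Sigma^{*}$ differ only on honest parties' inputs and outputs, then yields $V^{\pprotocol, \adv}(\sigma_0) \approx_s V^{\pprotocol, \adv}(\sigma_1)$.
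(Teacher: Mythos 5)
Your high-level idea — that mixing happens only at honest hops, that each onion still visits $\Omega(\log^2\sparam)$ honest servers with overwhelming probability, and that the balls-in-bins/telescoping argument from Theorem~\ref{thm-random} can be rerun on the honest sub-chain — is exactly right, and your ``enhanced adversary'' framing (grant the adversary the full input-to-output mapping at every monitored server for free) is a clean way to make the monitored-round analysis rigorous. In fact, you spell out a level of detail the paper's proof leaves implicit: the paper simply invokes Theorem~\ref{thm-random} after observing (via Lemma~\ref{lem-mix}b) that there are polylogarithmically many rounds where the relevant onions are at honest bins, without re-deriving the mixing restricted to honest hops or flagging the re-correlation subtlety.

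Where the two proofs genuinely diverge is in the final assembly, and this is where your proposal has a gap. The paper proves the theorem by a two-case reduction: \textbf{Case 1} takes $\sigma_0$ and $\sigma_1$ that differ only by swapping the inputs of two honest users (so $d(\sigma_0,\sigma_1)=2$), shows both swapped onions pass through honest bins in polylog rounds, and invokes Theorem~\ref{thm-random} to conclude the \emph{pair} of onion positions is statistically indistinguishable under the swap; \textbf{Case 2} handles $d(\sigma_0,\sigma_1)>2$ by a standard hybrid argument, chaining together polynomially many distance-2 swaps. Your conclusion — ``a union bound over all target senders \dots then yields $V^{\pprotocol, \adv}(\sigma_0) \approx_s V^{\pprotocol, \adv}(\sigma_1)$'' — does not establish this. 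Knowing that each onion's marginal position is close to uniform is weaker than knowing the adversary's \emph{joint} view under $\sigma_0$ is statistically close to its joint view under $\sigma_1$; a union bound over marginals does not give indistinguishability of the two global view distributions. To close the gap, replace that last sentence with the paper's structure: reduce to the two-user swap case (where your honest-sub-chain mixing argument directly shows the coupled positions of the two swapped onions are exchangeable up to negligible error, and everything else in the view is identically distributed), and then hybrid over swaps to reach arbitrary $\sigma_0, \sigma_1 \in \Sigma^*$ agreeing on the corrupted parties' inputs and outputs.
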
 

\begin{proof}
We prove this by cases. 

In the first case, $\sigma_1$ is the same as $\sigma_0$ except that the inputs of two users are swapped, i.e., $d(\sigma_0, \sigma_1) = 2$. Using Chernoff bounds for Poisson trials (Lemma~\ref{lem-mix}b in Appendix~\ref{sec:proofs}), there are at least some polylog number of rounds where the swapped onions are both routed to an honest bin (not necessarily the same bin). From Theorem~\ref{thm-random}, after the polylog number of steps, the locations of these two target onions are statistically indistinguishable from each other. 

In the second case, $d(\sigma_0, \sigma_1) > 2$. However, the distance between $\sigma_0$ and $\sigma_1$ is always polynomially bounded. By a simple hybrid argument, it follows that $V^{\Pi, \adv}(\sigma_0) \approx_s V^{\Pi, \adv}(\sigma_1)$ from case~1.
\end{proof}

\textbf{Remark:}  Protocol~$\pprotocol$ is not secure from the active adversary. This is because, with non-negligible probability, any honest user will choose a corrupted party as its first hop on its onion's routing path, in which case the adversary can drop the target user's onion at the first hop and observe who does not receive an onion at the last round.

\section{The active adversary} \label{sec:pi3}
We now present an OR protocol, $\aprotocol$, that is secure from the active adversary. The setting for $\aprotocol$ is different from that of our previous solution in a couple of important ways. Whereas $\pprotocol$ is \emph{statistically} private from the passive adversary, $\aprotocol$ is only \emph{differentially} private from the active adversary. The upside is that we are no longer constrained to operate in the simple I/O setting; the input can be any valid input. 

We let $[N]$ be the set of $N$ parties participating in a protocol. Every party is both a user and a server. As before, $\mathcal{OR} = (\mathsf{Gen}, \mathsf{FormOnion}, \mathsf{ProcOnion})$ is a secure onion routing scheme; and for every $i\in[N]$, $(\pk_i, \sk_i) \gets \mathsf{Gen}(\secparam)$ denotes the key pair generated for party~$i$, where $\secpar$ is the security parameter. Further, we assume that every pair~$(i, k)\in[N]^2$ of parties shares a common secret key\footnote{In practice, the shared keys do not need to be set up in advance; they can be generated as needed from an existing PKI, e.g., using Diffie-Hellman.}, denoted by $\sk_{i, k}$. 

$F$ is a pseudorandom function (PRF). 

We describe the protocol by the setup and routing algorithms for party~$i\in[N]$; each honest party runs the same algorithms. 

\myparagraph{Setup} Let $L = \beta \log^2 \secpar$ for some constant $\beta >0$. 

During the setup phase, party~$i$ prepares a set of onions from its input. For every message pair~$u = \{m, j\}$ in party~$i$'s input, party~$i$ picks a sequence~$T_1^u, \dots, T_\depth^u$ of parties, where each party~$T_\ell^u$ is chosen independently and uniformly at random, and forms an onion from the message~$m$, the routing path~$(T_1^u, \dots, T_\depth^u, j)$, the public keys~$(\pk_{T_1^u}, \dots, \pk_{T_\depth^u}, \pk_{j})$, and a list of empty nonces.  

Additionally, party~$i$ forms some dummy onions, where a dummy onion is an onion formed using the empty message~$\bot$. 
\begin{enumerate}
\item \textbf{for} every index~$(r, k) \in [\depth] \times [N]$:
\begin{enumerate}
\item compute $b \gets F(\sk_{i, k}, \mathsf{session}+r, 0)$, where $\mathsf{session}\in\mathbb{N}$ denotes the protocol instance. 
\item \textbf{if} $b \equiv 1$ --- set to occur with frequency $\frac{\alpha \log^{2} \secpar}{N}$ for some constant $\alpha >0$ --- \textbf{do}:
\begin{enumerate}
\item choose a list~$T^{r, k} = (T_1^{r, k}, \dots, T_{r-1}^{r, k}, T_{r+1}^{r, k}, \dots,T_{\depth+1}^{r, k})$ of parties, where each party is chosen independently and uniformly at random; 
\item create a list~$s^{r, k} = (s_1^{r, k}, \dots, s_\depth^{r, k})$ of nonces, where 
\[
s_r^{r, k} = (\mathsf{checkpt}, F(\sk_{i, k}, \mathsf{session}+r, 1)) ,
\]
and all other elements of $s^{r, k}$ are $\bot$; and
\item form a dummy onion using the message~$\bot$, the routing path~$T^{r, k}$, the public keys associated with $T^{r, k}$, and the list~$s^{r, k}$ of nonces. 
\end{enumerate}
\item \textbf{end if} 
\end{enumerate}
\item \textbf{end for}
\end{enumerate}

The additional information~$s_r^{r, k}$ is embedded in only the $r$-th layer; no additional information is embedded in any other layer. At the first round of the protocol run, all formed onions are sent to their respective first hops. 

\myparagraph{Routing} 
If party~$i$ forms a dummy onion with nonce~$s_r^{r, k}$ embedded in the $r$-th layer, then it expects to receive a symmetric dummy onion at the $r$-th round formed by another party~$k$ that, when processed, reveals the same nonce~$s_r^{r, k}$. If many checkpoint nonces are missing, then party~$i$ knows to abort the protocol run. 

After every round~$r\in[\depth]$ (but before round~$r+1)$, party~$i$ peels the onions it received at round~$r$ and counts the number of missing checkpoint nonces. If the count exceeds a threshold value~$t$, the party aborts the protocol run; otherwise,  at round~$r+1$, the peeled onions are sent to their next destinations in random order. After the final round, party~$i$ outputs the set of messages revealed from processing its the onions it receives at round~$\depth+1$. 

\myparagraph{Correctness and efficiency} Recalling that correctness is defined with respect to the passive adversary, $\aprotocol$ is clearly correct. Moreover, unless an honest party aborts the protocol run, all messages that are not dropped by the adversary are delivered to their final destinations. In $\aprotocol$, the communication complexity blow-up is $O(\log^{6} \secpar)$, since the latency is $\depth+1 = O(\log^2 \secpar)$ rounds, and the server load is $O(\log^{4} \secpar)$. 

\myparagraph{Privacy} 
To prove that $\aprotocol$ is secure, we require that the thresholding mechanism does its job: 
\begin{lemma} \label{lem:main} 
In~$\aprotocol$, if $F$ is a random function, 
$
t = c(1-d)(1-\corruptions)^2 \alpha \log^{2} \secpar 
$
for some $c, d \in (0, 1)$, and an honest party does not abort within the first $r$~rounds of the protocol run, then with overwhelming probability, at least $(1 - c)$ of the dummy onions created between honest parties survive at least $(r-1)$ rounds, even in the presence of an active adversary non-adaptively corrupting a constant fraction $\corruptions \in [0, 1)$ of the parties. 
\end{lemma}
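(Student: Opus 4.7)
My plan is to prove the lemma by contrapositive: I show that if the adversary causes more than a $c$-fraction of the honest--honest dummy onions to fail within the first $r-1$ rounds, then with overwhelming probability some honest verifier's missing-nonce count exceeds $t$ at some round $r' \le r$, which triggers an abort and contradicts the hypothesis.

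First I would pin down the available cover traffic. Since $F$ is a random function, for each honest verifier $i$ and each round $r' \le r$, the number of honest partners $k$ that share a matched checkpoint pair with $i$ at round $r'$ is a sum of $(1-\corruptions)N$ independent Bernoulli$(\alpha \log^{2}\sparam / N)$ variables with mean $(1-\corruptions)\alpha\log^{2}\sparam$. A Chernoff bound together with a union bound over the polynomially many $(i,r')$ pairs shows that, with overwhelming probability, every honest verifier sees at least $(1-d)(1-\corruptions)\alpha\log^{2}\sparam$ honest-partner checkpoints at every round $r' \le r$.

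Second I would translate drops into aborts. Every dropped honest--honest checkpoint onion registers as a missing nonce at its honest verifier's scheduled round. Combining this with the lower bound from Step~1, the threshold $t = c(1-d)(1-\corruptions)^{2}\alpha\log^{2}\sparam$ is exceeded at $(i,r')$ precisely when more than a $c(1-\corruptions)$-fraction of $i$'s honest-partner checkpoints for round $r'$ are dropped. The no-abort hypothesis therefore forces the drop fraction at every honest $(i, r')$ with $r' \le r$ to be at most $c(1-\corruptions)$, so at most a $c(1-\corruptions)$-fraction of honest--honest checkpoint onions scheduled for verification in rounds $\le r$ are lost during the first $r-1$ rounds.

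Third I would extend this bound to \emph{all} honest--honest checkpoints, including those scheduled for verification after round $r$, by invoking the unlinkability of $\mathcal{OR}$ in the idealized encryption model. Conditioned on the adversary's view at any point during rounds $1,\dots,r-1$, the hidden labels of each honest-generated onion still in transit (partner identity, scheduled verification round, whether honest--honest) are information-theoretically independent of what the adversary can see, because the shared keys between honest parties keep the relevant PRF inputs hidden. Consequently, the adversary's drop choices are effectively oblivious to the checkpoint partition, and a Chernoff argument extends the Step~2 bound to the entire population of honest--honest dummy onions, giving an overall drop fraction of at most $c(1-\corruptions) + o(1) < c$ and therefore at least a $(1-c)$-fraction surviving $r-1$ rounds.

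The main obstacle is the rigorous execution of Step~3: formalizing that the adversary's drop strategy is oblivious to the checkpoint categorization, so that per-category drop rates agree up to Chernoff slack. This reduces to the idealized-encryption security of $\mathcal{OR}$ together with the independence of $F$'s outputs from the adversary's view, and must also handle adaptive drop decisions made within a round as well as a negligible-probability bound ruling out the adversary forging a valid checkpoint nonce on an injected onion.
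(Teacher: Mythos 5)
Your plan is essentially the paper's argument, with the same structural pillars: restrict attention to honest--honest ("unmarked") checkpoint onions, observe that the adversary cannot forge their nonces and cannot distinguish among them, use a Chernoff bound plus a union bound to guarantee every honest verifier sees $\Theta((1-\corruptions)\alpha\log^2\sparam)$ honest-partner checkpoints in every round, and then convert the no-abort hypothesis into a bound on the global drop fraction. The one place where you diverge is the packaging of the concentration step. You split by checkpoint round (those $\le r$ versus those $> r$), bound early drops deterministically from no-abort, and then try to propagate the bound to the late onions via an obliviousness argument --- and you correctly flag that this last step is the delicate one, because the adversary's drop decisions are adaptive and because conditioning on no-abort can, in principle, bias the apparent composition of the drop set. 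The paper sidesteps this entirely by not splitting at all: since the adversary's view is independent of the verifier/round labels of unmarked onions, its cumulative drop set at any round is a label-oblivious selection, i.e.\ a uniformly random sample without replacement from the unmarked population, so the count of drops landing at any fixed honest $(i,r')$ is hypergeometric and concentrates (citing a known hypergeometric tail bound, \cite{HS05}); if the cumulative drop count exceeded $cu$, some verifier's missing-nonce count would exceed $t$ with overwhelming probability, which is exactly the contrapositive you want. Adopting that hypergeometric bound in place of your two-category split would close your Step~3 cleanly, treat all checkpoint rounds on equal footing, and avoid the conditioning subtlety you identified as the main obstacle. Your Step~2 arithmetic is fine (threshold over cover lower bound gives the $c(1-\corruptions)$ per-pair fraction, and each honest--honest onion is counted at exactly one honest $(i,r')$), as is your observation that a negligible-probability nonce-forgery bound is needed for injected onions --- the paper makes the same remark.
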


The proof relies on a known concentration bound for the hypergeometric distribution~\cite{HS05} and can be found in Appendix~\ref{sec:proofs}. 

\begin{theorem} \label{thm:main}
If, in $\aprotocol$, $F$ is a random function, $N \ge \frac{3}{1-\corruptions}$, and 
$
t = c(1-d)(1-\corruptions)^2 \alpha \log^{2} \secpar 
$
for some $c, d \in (0, 1)$, then, for 
$
\alpha\beta \ge -\frac{36 (1+\epsilon/2)^2 \ln \left(\delta/4\right)}{(1-c) (1-\corruptions)^2 \epsilon^2} ,
$
$\aprotocol$ is $(\epsilon, \delta)$-DP from the active adversary non-adaptively corrupting a constant fraction $\corruptions \in [0, 1)$ of the parties. 
\end{theorem}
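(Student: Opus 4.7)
The plan is to combine a pseudorandom-function hybrid with a case split on the threshold-abort event, reducing the remaining analysis to a binomial-concentration argument on the surviving cover traffic. First I would invoke the security of $F$ to replace it with a truly random function in every honest-party query; since honest-party queries on distinct tuples $(\sk_{i,k}, \mathsf{session}+r, b)$ never repeat, this substitution costs at most $\mathbf{negl}(\sparam)$ in $\delta$ and brings us into the exact regime of Lemma~\ref{lem:main}. After this reduction, fix an arbitrary active adversary $\adv$ and neighboring $\sigma_0, \sigma_1 \in \Sigma$ that differ by the insertion or deletion of a single message pair $(m,j)$ at some honest party $i^{*}$. In the idealized-encryption model the corresponding differing onion is information-theoretically independent of $(m,j)$, and the only way the adversary learns of it is through the first-hop link it occupies and through whichever of its downstream hops happen to be corrupted.

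Next, partition on the event $E$ that some honest party aborts during the run. On $E^{c}$, Lemma~\ref{lem:main} guarantees that for every round $r$, at least a $(1-c)$ fraction of the honest-to-honest checkpoint onions survives, so the pool of honest-formed onions arriving at each honest mix-node per round is concentrated around $\mu = \Theta\bigl((1-c)(1-\corruptions)^2 \alpha \log^{2} \sparam\bigr)$. The adversary's view conditional on $E^{c}$ is then a function of the per-round, per-corrupted-party arrival counts, and swapping $\sigma_0 \leftrightarrow \sigma_1$ shifts exactly one such count by at most $\pm 1$ in each of the $\depth$ rounds. This is the setting of binomial-noise differential privacy: the standard ratio bound that $\pr[\mathrm{Bin}(n,q) = k]/\pr[\mathrm{Bin}(n,q) = k+1]$ lies in $[e^{-\epsilon'}, e^{\epsilon'}]$ except on an event of mass $\delta'$ whenever $nq = \Omega(\ln(1/\delta')/(\epsilon')^2)$ gives per-round DP; composed across the $\depth = \beta \log^{2} \sparam$ rounds and using the multiplicative Chernoff slack $(1+\epsilon/2)^{2}$, this yields $(\epsilon, \delta/2)$-DP on $E^{c}$ precisely when $\alpha\beta \ge -36(1+\epsilon/2)^{2}\ln(\delta/4)/\bigl((1-c)(1-\corruptions)^{2}\epsilon^{2}\bigr)$, reproducing the stated parameter inequality. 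On $E$ itself, no honest onion is forwarded past the abort round, so the view reduces to the pre-abort transcript; I would bound $|\pr[E\mid\sigma_0] - \pr[E\mid\sigma_1]|$ by $\delta/2$ via the same one-onion sensitivity argument, since the differing onion shifts any single checkpoint-miss count by at most $1$ while the slack $(1-c)(1-d)$ between the expected count and the threshold $t$ absorbs that shift outside a negligible bad event.

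The main obstacle is the interaction between the adversary's adaptive dropping strategy and the case split: the adversary observes the running view and can tailor its drops to correlate the abort decision with the identity of the differing onion. Overcoming this cleanly requires a uniform sensitivity bound, holding against every adversary strategy, on the joint vector of per-round, per-party count statistics with respect to $\sigma_0 \leftrightarrow \sigma_1$. Once this sensitivity-$1$ bound is in hand, both the binomial DP composition on $E^{c}$ and the abort-probability comparison on $E$ follow from routine concentration-plus-composition arguments, and summing the two contributions yields the overall $(\epsilon, \delta)$-DP guarantee.
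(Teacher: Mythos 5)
The key ingredient you are missing is the statistical mixing argument. The paper's proof does \emph{not} compose differential-privacy bounds across $\depth$ rounds; instead, once Lemma~\ref{lem:main} establishes that a constant fraction of honest-to-honest dummy onions survive the first half of the run, Theorem~\ref{lem:passive} (the statistical privacy result for $\pprotocol$) is invoked to conclude that by the midpoint the surviving onions are statistically unlinked from their senders. After that, the adversary's view collapses to just two scalar leakages: the total volume~$X$ of onions sent out by the differing onion's sender~$P_s$, and the total volume~$Y$ received by its recipient~$P_r$. Each is a \emph{single} binomial random variable ($X$ with mean $\alpha\beta\log^4\sparam$, $Y$ with a comparable mean), and the $(\epsilon,\delta)$-DP bound comes from the point-probability ratio of a binomial shifted by one plus a Chernoff tail — not from round-by-round composition. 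Your per-round composition would incur a degradation factor of $\depth = \beta\log^2\sparam$, producing a much worse (and mismatched) parameter dependence than the stated $\alpha\beta \ge -36(1+\epsilon/2)^2\ln(\delta/4)/((1-c)(1-\corruptions)^2\epsilon^2)$; the $\beta$ in the bound enters through $\mathbb{E}[X]$ (total dummies per party over the whole run scales with $\depth$), not through a composition theorem.

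Your case split is also different from the paper's and leaves a gap. The paper partitions on whether \emph{all} honest parties abort within the \emph{first half} (Case~1) versus some honest party surviving the first half (Case~2). This timing matters: in Case~1, with overwhelming probability no honest onion reaches its destination (an onion can only escape past the universal abort if the rest of its path is entirely corrupted, which is negligibly likely by Lemma~\ref{lem-mix}b), so the receiver-side leakage vanishes. In Case~2 the mixing argument applies. You instead split on "some honest party aborts at some time" versus "no abort," and try to bound $|\pr[E\mid\sigma_0]-\pr[E\mid\sigma_1]|$. That does not directly handle the intermediate regime where some but not all honest parties abort, or where an abort occurs in the second half, and in any case the paper does not bound the abort probability difference — it argues about what the view reveals in each regime. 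Finally, your "main obstacle" (the adversary tailoring drops adaptively) is handled in the paper not by establishing a uniform sensitivity-$1$ bound on per-round per-party counts — that quantity depends on the adversary's strategy and is not obviously sensitivity-$1$ — but by the dark-angel argument: the adversary is \emph{given} the entire second half of every surviving onion's route for free, and then the analysis shows that even so, only $X$ and $Y$ matter, after which post-processing covers any additional drops in the second half.
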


\begin{proof} The proof is by cases. 

\myparagraph{Case 1} All honest parties abort within the first half of the protocol run. 
In this case, with overwhelming probability, no onion created by an honest party will be delivered to its final destination. This is because each intermediary hop of every onion created by an honest party is chosen independently and uniformly at random. Thus, there is only a negligible probability of the latter half of an onion's routing path passing through only corrupted parties (Lemma~\ref{lem-mix}b). Since there are a polynomial number of onions, by the union bound, the probability that there exists an onion whose routing path passes through only corrupted parties is negligibly small. Since no onion created by an honest party is delivered to its final destination, the adversary doesn't learn anything.  

\myparagraph{Case 2} Some honest party doesn't abort within the first half of the protocol run. Let $\adv$ be any adversary that non-adaptively corrupts a constant $\corruptions \in [0, 1)$ of the parties. Suppose that for every onion that survive the first half of the protocol run, a dark angel provides the adversary~$\adv$ with the second half of the onion's routing path. Further suppose that no other onions are dropped in the second half of the protocol run. (If more onions are dropped, then $\aprotocol$ is secure from the post-processing theorem for differential privacy~\cite[Proposition~2.1]{Dworkbook}.) 

For any two neighboring inputs $\sigma_0$ and $\sigma_1$, the only difference in the adversary's views, $V^{\aprotocol, \adv}(\sigma_0)$ and $V^{\aprotocol, \adv}(\sigma_1)$, is the routing of a single onion~$O$. If there is an honest party who does not abort within the first half of the protocol run, then from Lemma~\ref{lem:main}, some constant fraction of the dummy onions created by the honest parties survive the first half of the protocol run with overwhelming probability. So, from Theorem~\ref{lem:passive}, the onions are no longer linked to their senders by the end of the first half of the protocol run. Thus, the only information that $\adv$ could find useful is the volume of onions sent out by the sender~$P_s$ of the extra onion~$O$ and the volume of onions received by the receiver~$P_r$ of $O$. 

Let $X$ denote the number of dummy onions created by $P_s$. For every $(k, r) \in [\depth]\times[N]$, an honest sender~$P_s$ creates a dummy onion with probability $\frac{\alpha \log^{2} \secpar}{N}$; so $X \sim \mathsf{Binomial}({H}, {p})$, where ${H} = LN$, and ${p} = \frac{\alpha \log^{2}\secpar}{N}$. 

Let $Y \sim \mathsf{Binomial}({G}, {q})$ be another binomial random variable with parameters~${G} = \frac{\depth(1-\corruptions)^2 N^2}{3}$ and ${q} = \frac{(1-c) \alpha \log^{2}\secpar}{N^2}$. For $N \ge \frac{3}{1-\corruptions}$ and sufficiently small $d>0$, ${G} \le (1-d) L \binom{(1-\corruptions)N -1}{2}$; thus, with overwhelming probability, $Y$ is less than the number of dummy onions created between honest non-$P_s$ parties and received by $P_r$ in the final round (Chernoff bounds).

Let $\mathcal{O} \stackrel{\text{def}}{=} \mathbb{N}\times\mathbb{N}$ be the sample space for the multivariate random variable~$(X, Y)$. 

Let $\mathcal{O}_1$ be the event that $|X - \mathbb{E}[X]| \le d'\mathbb{E}[X]$, and $|Y - \mathbb{E}[Y]| \le d'\mathbb{E}[Y]$, where $d' = \frac{\epsilon/2}{1+\epsilon/2}$, $\mathbb{E}[X] = {H}{p}$ is the expected value of $X$, and $\mathbb{E}[Y] = {G}{q}$ is the expected value of $Y$; and let $\bar{\mathcal{O}_1}$ be the complement of $\mathcal{O}_1$. 

For every $(x, y) \in \mathcal{O}_1$, we can show that
\begin{align}
\max\left(\frac{\prob{(X, Y) = (x, y)}}{\prob{(X, Y) = (x+1, y+1)} }, \frac{\prob{(X, Y) = (x+1, y+1)}}{\prob{(X, Y) = (x, y)}} \right) \le e^\epsilon .\label{eq:footnote1} 
\end{align}
We can also show that the probability of the tail event $\bar{\mathcal{O}_1}$ occurring is negligible in $\secpar$ and at most $\delta$ when 
$
\alpha\beta \ge -\frac{36 (1+\epsilon/2)^2 \ln \left(\delta/4\right)}{(1-c) (1-\corruptions)^2 \epsilon^2} . \label{eq:footnote2}
$
(See Appendix~\ref{sec:proofs}.) 

Any event $\mathcal{E}$ can be decomposed into two subsets $\mathcal{E}_{1}$ and $\mathcal{E}_{2}$, such that (1)~$\mathcal{E} = \mathcal{E}_{1} \cup \mathcal{E}_{2}$, (2)~$ \mathcal{E}_{1} \subseteq \mathcal{O}_1$, and (3)~$\mathcal{E}_{2} \subseteq \bar{\mathcal{O}_1}$. It follows that, for every event $\mathcal{E}$,
\begin{align}
\prob{(X,Y) \in \mathcal{E}} &\le e^{\epsilon} \cdot \prob{(X+1, Y+1) \in \mathcal{E}} + \delta  \text{, and} \label{eq:first} \\
\prob{(X+1,Y+1) \in \mathcal{E}} &\le e^{\epsilon} \cdot \prob{(X,Y) \in \mathcal{E}} + \delta . \label{eq:second}
\end{align}

The views $V^{\aprotocol, \adv}(\sigma_0)$ and $V^{\aprotocol, \adv}(\sigma_1)$ are the same except that $O$ exists in one of the views but not in the other. Thus, \eqref{eq:first} and \eqref{eq:second} suffice to show that for any set $\mathcal{V}$ of views and for any $b \in \{0, 1\}$, 
$
\prob{V_b^{\aprotocol, \adv} \in \mathcal{V}}  \le e^{\epsilon} \cdot \prob{V_{\bar{b}}^{\aprotocol, \adv} \in \mathcal{V}} + \delta ,
$
where $\bar{b} = b+1 \mod 2$. 
\end{proof}

\textbf{Remark:} 
Our protocols are for a \emph{single-pass} setting, where the users send out messages once. It is clear how our statistical privacy results would compose for the multi-pass case. To prove that $\aprotocol$ also provides differential privacy in the multi-pass scenario --- albeit for degraded security parameters --- we can use the $k$-fold composition theorem~\cite{Dworkbook}; the noise falls at a rate of the square-root of the number of runs. 

\subsection*{Acknowledgements}
The work was supported in part by NSF grants IIS-1247581 and CNS-1422361. 
We thank Dov Gordon for pointing out an issue in the proof of Theorem~\ref{thm-random} in a prior version of this paper; the proof has been fixed for this version!

\bibliographystyle{is-alpha}
\bibliography{OR-protocols}

\newcommand{\etalchar}[1]{$^{#1}$}
\begin{thebibliography}{KCGDF17}
\ifx \showCODEN  \undefined \def \showCODEN #1{CODEN #1}  \fi
\ifx \showISBN   \undefined \def \showISBN  #1{ISBN #1}   \fi
\ifx \showISSN   \undefined \def \showISSN  #1{ISSN #1}   \fi
\ifx \showLCCN   \undefined \def \showLCCN  #1{LCCN #1}   \fi
\ifx \showPRICE  \undefined \def \showPRICE #1{#1}        \fi
\ifx \showURL    \undefined \def \showURL {URL }          \fi
\ifx \path       \undefined \input path.sty               \fi
\ifx \ifshowURL \undefined
     \newif \ifshowURL
     \showURLtrue
\fi

\bibitem[AAC{\etalchar{+}}11]{AlvAndCha+12}
M{\'a}rio~S. Alvim, Miguel~E. Andr{\'e}s, Konstantinos Chatzikokolakis,
  Pierpaolo Degano, and Catuscia Palamidessi.
\newblock Differential privacy: on the trade-off between utility and
  information leakage.
\newblock In {\em {FAST 2011}}, pages {39--54}. {Springer}, {2011}.

\bibitem[BFTS04]{FC:BerFiaTaS04}
Ron Berman, Amos Fiat, and Amnon Ta-Shma.
\newblock Provable unlinkability against traffic analysis.
\newblock In Ari Juels, editor, {\em FC 2004}, volume 3110 of {\em {LNCS}},
  pages 266--280. Springer, Heidelberg, Germany, Key West, USA, February~9--12,
  2004.

\bibitem[BGKM12]{BGKM}
Michael Backes, Ian Goldberg, Aniket Kate, and Esfandiar Mohammadi.
\newblock Provably secure and practical onion routing.
\newblock In {\em Computer Security Foundations Symposium (CSF), 2012 IEEE
  25th}, pages 369--385. IEEE, 2012.

\bibitem[BIK{\etalchar{+}}09]{BIK+}
Matt Blaze, John Ioannidis, Angelos~D Keromytis, Tal Malkin, and Aviel~D Rubin.
\newblock Anonymity in wireless broadcast networks.
\newblock {\em IJ Network Security}, 8\penalty0 (1):\penalty0 37--51, 2009.

\bibitem[BKM{\etalchar{+}}14]{EPRINT:BKMMM14}
Michael Backes, Aniket Kate, Praveen Manoharan, Sebastian Meiser, and Esfandiar
  Mohammadi.
\newblock {AnoA}: A framework for analyzing anonymous communication protocols.
\newblock Cryptology ePrint Archive, Report 2014/087, 2014.
\newblock \url{http://eprint.iacr.org/2014/087}.

\bibitem[Can01]{FOCS:Canetti01}
Ran Canetti.
\newblock Universally composable security: A new paradigm for cryptographic
  protocols.
\newblock In {\em 42nd FOCS}, pages 136--145. {IEEE} Computer Society Press,
  Las Vegas, NV, USA, October~14--17, 2001.

\bibitem[CB95]{SP:CooBir95}
David~A. Cooper and Kenneth~P. Birman.
\newblock Preserving privacy in a network of mobile computers.
\newblock In {\em 1995 {IEEE} Symposium on Security and Privacy}, pages 26--38.
  {IEEE} Computer Society Press, 1995.

\bibitem[CBM15]{SP:CorBonMaz15}
Henry {Corrigan-Gibbs}, Dan Boneh, and David Mazi{\`e}res.
\newblock Riposte: An anonymous messaging system handling millions of users.
\newblock In {\em 2015 {IEEE} Symposium on Security and Privacy}, pages
  321--338. {IEEE} Computer Society Press, San Jose, CA, USA, May~17--21, 2015.

\bibitem[Cha81]{Chaum81}
David~L Chaum.
\newblock Untraceable electronic mail, return addresses, and digital
  pseudonyms.
\newblock {\em {Communications of the ACM}}, {24}\penalty0 ({2}):\penalty0
  {84--90}, {1981}.

\bibitem[Cha88]{JC:Chaum88}
David Chaum.
\newblock The dining cryptographers problem: {Unconditional} sender and
  recipient untraceability.
\newblock {\em Journal of Cryptology}, 1\penalty0 (1):\penalty0 65--75, 1988.

\bibitem[CL05]{C:CamLys05}
Jan Camenisch and Anna Lysyanskaya.
\newblock A formal treatment of onion routing.
\newblock In Victor Shoup, editor, {\em CRYPTO~2005}, volume 3621 of {\em
  {LNCS}}, pages 169--187. Springer, Heidelberg, Germany, Santa Barbara, CA,
  USA, August~14--18, 2005.

\bibitem[CPP08]{ChaPalPan08}
Konstantinos Chatzikokolakis, Catuscia Palamidessi, and Prakash Panangaden.
\newblock Anonymity protocols as noisy channels.
\newblock {\em {Information and Computation}}, {206}\penalty0
  ({2--4}):\penalty0 {378--401}, {2008}.

\bibitem[DDM03]{SP:DanDinMat03}
George Danezis, Roger Dingledine, and Nick Mathewson.
\newblock Mixminion: Design of a type {III} anonymous remailer protocol.
\newblock In {\em 2003 {IEEE} Symposium on Security and Privacy}, pages 2--15.
  {IEEE} Computer Society Press, Berkeley, CA, USA, May~11--14, 2003.

\bibitem[DDN00]{DolDwoNao00}
Danny Dolev, Cynthia Dwork, and Moni Naor.
\newblock Nonmalleable cryptography.
\newblock {\em {SIAM} Journal on Computing}, 30\penalty0 (2):\penalty0
  391--437, 2000.

\bibitem[DM05]{DM05}
Roger Dingledine and Nick Mathewson.
\newblock Tor: An anonymous internet communication system.
\newblock In {\em {Workshop on Vanishing Anonymity, Proceedings from the
  Conference on Computers, Freedom, and Privacy}}, {2005}.

\bibitem[DMNS06]{tcc:dmns06}
Cynthia Dwork, Frank McSherry, Kobbi Nissim, and Adam Smith.
\newblock Calibrating noise to sensitivity in private data analysis.
\newblock In Shai Halevi and Tal Rabin, editors, {\em TCC~2006}, volume 3876 of
  {\em {LNCS}}, pages 265--284. Springer, Heidelberg, Germany, New York, NY,
  USA, March~4--7, 2006.

\bibitem[DMS04]{Tor}
Roger Dingledine, Nick Mathewson, and Paul Syverson.
\newblock Tor: The second-generation onion router.
\newblock Technical report, {DTIC Document}, {2004}.

\bibitem[DRS04]{EC:DodReySmi04}
Yevgeniy Dodis, Leonid Reyzin, and Adam Smith.
\newblock Fuzzy extractors: How to generate strong keys from biometrics and
  other noisy data.
\newblock In Christian Cachin and Jan Camenisch, editors, {\em EUROCRYPT~2004},
  volume 3027 of {\em {LNCS}}, pages 523--540. Springer, Heidelberg, Germany,
  Interlaken, Switzerland, May~2--6, 2004.

\bibitem[{Dwo}14]{Dworkbook}
{Dwork, Cynthia and Roth, Aaron}.
\newblock {The Algorithmic Foundations of Differential Privacy}.
\newblock {\em {Foundations and Trends{\textregistered} in Theoretical Computer
  Science}}, {9}\penalty0 ({3--4}):\penalty0 {211--407}, {2014}.

\bibitem[FJS07]{FC:FeiJohSyv07}
Joan Feigenbaum, Aaron Johnson, and Paul~F. Syverson.
\newblock A model of onion routing with provable anonymity.
\newblock In Sven Dietrich and Rachna Dhamija, editors, {\em FC 2007}, volume
  4886 of {\em {LNCS}}, pages 57--71. Springer, Heidelberg, Germany,
  Scarborough, Trinidad and Tobago, February~12--16, 2007.

\bibitem[FJS12]{FJS}
Joan Feigenbaum, Aaron Johnson, and Paul Syverson.
\newblock {Probabilistic analysis of onion routing in a black-box model}.
\newblock {\em {ACM Transactions on Information and System Security (TISSEC)}},
  {15}\penalty0 ({3}):\penalty0 {1--28}, Nov {2012}.

\bibitem[Gol01]{Goldreich01}
Oded Goldreich.
\newblock {\em Foundations of Cryptography: Basic Tools}, volume~1.
\newblock Cambridge University Press, Cambridge, UK, 2001.
\newblock \showISBN{0-521-79172-3 (hardback)}.
\newblock xix + 372 pp.
\newblock \showLCCN{QA268.G5745 2001}.

\bibitem[HS05]{HS05}
Don Hush and Clint Scovel.
\newblock Concentration of the hypergeometric distribution.
\newblock {\em Statistics \& probability letters}, 75\penalty0 (2):\penalty0
  127--132, 2005.

\bibitem[ILL89]{STOC:ImpLevLub89}
Russell Impagliazzo, Leonid~A. Levin, and Michael Luby.
\newblock Pseudo-random generation from one-way functions (extended abstracts).
\newblock In {\em 21st ACM STOC}, pages 12--24. {ACM} Press, Seattle, WA, USA,
  May~15--17, 1989.

\bibitem[KB07]{CCS:KopBas07}
Boris K{\"o}pf and David~A. Basin.
\newblock An information-theoretic model for adaptive side-channel attacks.
\newblock In Peng Ning, Sabrina De~Capitani {di Vimercati}, and Paul~F.
  Syverson, editors, {\em ACM CCS 07}, pages 286--296. {ACM} Press, Alexandria,
  Virginia, USA, October~28--31, 2007.

\bibitem[KCGDF17]{KCGDF17}
Albert Kwon, Henry Corrigan-Gibbs, Srinivas Devadas, and Bryan Ford.
\newblock Atom: Horizontally scaling strong anonymity.
\newblock In {\em Proceedings of the 26th Symposium on Operating Systems
  Principles}, SOSP '17, pages 406--422. ACM, New York, NY, USA, 2017.
\newblock \showISBN{978-1-4503-5085-3}.
\newblock \ifshowURL {\showURL
  \path|http://doi.acm.org/10.1145/3132747.3132755|}\fi.

\bibitem[MPRV09]{C:MPRV09}
Ilya Mironov, Omkant Pandey, Omer Reingold, and Salil~P. Vadhan.
\newblock Computational differential privacy.
\newblock In Shai Halevi, editor, {\em CRYPTO~2009}, volume 5677 of {\em
  {LNCS}}, pages 126--142. Springer, Heidelberg, Germany, Santa Barbara, CA,
  USA, August~16--20, 2009.

\bibitem[MU05]{MU05}
Michael Mitzenmacher and Eli Upfal.
\newblock {\em Probability and Computing: Randomized Algorithms and
  Probabilistic Analysis}.
\newblock {Cambridge University Press}, {2005}.

\bibitem[OGTU14]{ICALP:OGTU14}
Olga Ohrimenko, Michael~T. Goodrich, Roberto Tamassia, and Eli Upfal.
\newblock The melbourne shuffle: Improving oblivious storage in the cloud.
\newblock In Javier Esparza, Pierre Fraigniaud, Thore Husfeldt, and Elias
  Koutsoupias, editors, {\em ICALP 2014, Part~II}, volume 8573 of {\em {LNCS}},
  pages 556--567. Springer, Heidelberg, Germany, Copenhagen, Denmark,
  July~8--11, 2014.

\bibitem[RS93]{STOC:RacSim93}
Charles Rackoff and Daniel~R. Simon.
\newblock Cryptographic defense against traffic analysis.
\newblock In {\em 25th ACM STOC}, pages 672--681. {ACM} Press, San Diego, CA,
  USA, May~16--18, 1993.

\bibitem[SEF{\etalchar{+}}17]{SP:SEFCM17}
Yixin Sun, Anne Edmundson, Nick Feamster, Mung Chiang, and Prateek Mittal.
\newblock Counter-{RAPTOR}: Safeguarding tor against active routing attacks.
\newblock In {\em 2017 {IEEE} Symposium on Security and Privacy}, pages
  977--992. {IEEE} Computer Society Press, San Jose, CA, USA, May~22--26, 2017.

\bibitem[SW06]{SW06}
Vitaly Shmatikov and Ming-Hsiu Wang.
\newblock Measuring relationship anonymity in mix networks.
\newblock In {\em Proceedings of the 5th ACM workshop on Privacy in electronic
  society}, pages 59--62. ACM, 2006.

\bibitem[TGL{\etalchar{+}}17]{TGL+17}
Nirvan Tyagi, Yossi Gilad, Derek Leung, Matei Zaharia, and Nickolai Zeldovich.
\newblock Stadium: A distributed metadata-private messaging system.
\newblock In {\em Proceedings of the 26th Symposium on Operating Systems
  Principles}, SOSP '17, pages 423--440. ACM, New York, NY, USA, 2017.
\newblock \showISBN{978-1-4503-5085-3}.
\newblock \ifshowURL {\showURL
  \path|http://doi.acm.org/10.1145/3132747.3132783|}\fi.

\bibitem[vdHLZZ15]{vdHLZZ15}
Jelle van~den Hooff, David Lazar, Matei Zaharia, and Nickolai Zeldovich.
\newblock {Vuvuzela: scalable private messaging resistant to traffic analysis}.
\newblock In {\em {{SOSP} 2015}}, pages {137--152}. {ACM Press}, {2015}.

\end{thebibliography}


\appendix 

\section{In the standard model} \label{sec:reallife} 

A protocol that uses cryptographic tools like public-key encryption and authentication can only be differentially private when idealized versions of the tools are used. In practice, the best we can achieve is the computational analogue.
\begin{definition} [$\epsilon$-IND-CDP~\cite{C:MPRV09}]
Let $\Sigma_\secpar$ be the set of all possible polynomially sized (w.r.t.~$\secpar$) input vectors. A communications protocol $\Pi$ is \emph{$\epsilon$-IND-CDP} from every adversary in a class $\{\adv_\secpar\}_{\secpar \in \mathbb{N}}$ of non-uniform probabilistic polynomial-time (PPT) adversaries if for any adversary~$\adv_\secpar$, any two neighboring inputs $\sigma_{0, \secpar}, \sigma_{1, \secpar} \in \Sigma_{\secpar}$ that differ only on the honest parties' inputs, and any polynomially sized advice string $z_\secpar$,
\begin{align*}
&\prob{\adv_\secpar(1^\secpar, V^{\Pi, \adv}_{0, \secpar}, z_\secpar) = 1} \le e^\epsilon \cdot \prob{\adv_\secpar(1^\secpar, V^{\Pi, \adv}_{1, \secpar}, z_\secpar) = 1} + \negl,
\end{align*}
where $\secpar \in \mathbb{N}$ denotes the security parameter. 
\end{definition}

In Theorem~\ref{thm:main} we proved that $\aprotocol$ achieves differential privacy when the onions were constructed using idealized (information-theoretically secure) public-key encryption, and when the function $F$ for generating checkpoint onions is truly random. Such idealized encryption does not exist, and truly random functions are not available, either.  So our actual proposed solution is to use standard computationally secure tools instead. In this case, $\aprotocol$ achieves $\epsilon$-IND-CDP: 

\begin{theorem} \label{thm:main2}
If $\aprotocol$ is constructed using a CCA2-secure encryption scheme and pseudorandom function $F$, and if in $\aprotocol$,
$n \ge \frac{3}{1-\corruptions}$, and 
\[
t = c(1-d)(1-\corruptions)^2 \alpha \log^{2} \secpar
\]
for some $c, d \in (0, 1)$, then for sufficiently large protocol parameters $\alpha$ and $\beta$, $\aprotocol$ is $\epsilon$-IND-CDP from an active adversary capable of (non-adaptively) corrupting a constant fraction $\corruptions \in [0, 1)$ of the parties.
\end{theorem}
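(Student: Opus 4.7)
The plan is to derive Theorem~\ref{thm:main2} from Theorem~\ref{thm:main} by a standard sequence of hybrids that replaces, one at a time, the two idealizations used in Section~\ref{sec:pi3} --- the truly random checkpoint function $F$ and the information-theoretically secure onion encryption --- with their computationally secure counterparts. At each step the change in the adversary's distinguishing probability is bounded by a cryptographic advantage that is negligible in $\sparam$, and the two negligible losses are absorbed into the additive $\mathbf{negl}(\sparam)$ slack that the $\epsilon$-IND-CDP definition allows beyond the $e^\epsilon$ multiplicative factor.

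Concretely, fix a nonuniform PPT adversary $\adv_\sparam$, advice $z_\sparam$, and neighboring inputs $\sigma_{0,\sparam}, \sigma_{1,\sparam}$. For each $b \in \{0,1\}$, I would define three experiments. Experiment $H_0^b$ is the real execution of $\aprotocol$ on input $\sigma_{b,\sparam}$ with the PRF $F$ and the CCA2-secure onion scheme. Experiment $H_1^b$ is identical except that for every pair $\{i,k\}$ of honest parties the evaluations $F(\sk_{i,k},\cdot,\cdot)$ are replaced by an independent uniformly random function; a hybrid over the polynomially many such pairs bounds the distinguishing advantage between $H_0^b$ and $H_1^b$ by the PRF advantage, which is negligible. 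No replacement is needed for pairs involving a corrupted party, since those shared keys are already part of $\adv_\sparam$'s view.

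Experiment $H_2^b$ further replaces each onion formed by an honest party with an ``ideal'' onion that is information-theoretically unrelated to its message, routing path, and unprocessed nonces, as in the $\mathcal{F}_{\enc}$-hybrid model of~\cite{FOCS:Canetti01}. A hybrid over the polynomially many honest-party onions reduces the gap between $H_1^b$ and $H_2^b$ to the repeated onion-security game of Camenisch and Lysyanskaya~\cite{C:CamLys05}, which follows from CCA2 security of the underlying encryption. In $H_2^b$ the hypotheses of Theorem~\ref{thm:main} are now satisfied, so
\[
\pr[H_2^0 \text{ outputs } 1] \le e^\epsilon \cdot \pr[H_2^1 \text{ outputs } 1] + \delta(\sparam) ,
\]
for any $\delta(\sparam)$ one is willing to tolerate, provided $\alpha\beta$ is taken large enough to satisfy the bound of Theorem~\ref{thm:main} for that $\delta$; choosing $\alpha\beta$ to be any super-logarithmic function of $\sparam$ yields $\delta(\sparam) = \mathbf{negl}(\sparam)$ while keeping all efficiency parameters polynomial in $\sparam$. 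Stringing the three bounds together via the triangle inequality gives the desired $\epsilon$-IND-CDP guarantee.

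The main obstacle I anticipate is the bookkeeping in the onion-security hybrid: when the reduction swaps a single honest onion for an ideal one it must still faithfully simulate the honest parties' peeling of all other onions (including adversarial ones) and the resulting checkpoint counts, so that the threshold-based abort decisions are identically distributed in the two neighboring hybrids. This is exactly the scenario the Camenisch--Lysyanskaya onion-security game was designed for, but verifying that the abort mechanism in $\aprotocol$ does not leak any additional information about which onions have been idealized requires some care, and it is this interaction between the cryptographic reduction and the adversarial thresholding that will need the most attention in the full proof.
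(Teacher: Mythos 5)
Your proposal is correct and takes essentially the same approach as the paper: the paper also reduces to Theorem~\ref{thm:main} by replacing the PRF and the onion encryption with their ideal counterparts and absorbing the two negligible computational gaps into the additive slack of the $\epsilon$-IND-CDP definition. The only presentational difference is that the paper packages the two hybrid steps you describe in the language of Canetti's universal composability, invoking that a CCA2-secure scheme UC-realizes $\mathcal{F}_{\mathbf{enc}}$ and a secure PRF UC-realizes $\mathcal{F}_{\mathbf{PRF}}$, which automatically handles the simulation bookkeeping (including the threshold/abort interactions) that you rightly flag as the delicate part of a hand-rolled per-onion hybrid.
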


\begin{proof}
The proof is by a relatively standard hybrid argument; here we present it in the language of Canetti's universal composability~\cite{FOCS:Canetti01}. Let $\mathcal{F}_{\mathsf{enc}}$ be an ideal public-key encryption functionality and $\mathcal{F}_{\mathsf{PRF}}$ be an ideal pseudorandom function functionality.  

Recall that a CCA2-secure cryptosystem UC-realizes the (non-adaptive) $\mathcal{F}_{\mathsf{enc}}$ functionality, while a secure PRF UC-realizes the $\mathcal{F}_{\mathsf{PRF}}$ functionality.  That means that (1) there exist simulators $S_{\mathsf{enc}}$ and $S_{\mathsf{PRF}}$ that, in combination with $\mathcal{F}_{\mathsf{enc}}$ and $\mathcal{F}_{\mathsf{PRF}}$, respectively, realize idealized non-adaptive public-key encryption and a pseudorandom function, respectively.   Moreover, (2) for any PPT environment and adversary, the view that the adversary obtains in the $(\mathcal{F}_{\mathsf{enc}}, \mathcal{F}_{\mathsf{PRF}})$-hybrid model with these simulators is indistinguishable from its view when the CCA2 scheme and the PRF are used instead.

Let $\{\sigma_{0,\lambda}\}$ and $\{\sigma_{1,\lambda}\}$ be a sequence of neighboring inputs to $\aprotocol$, parameterized by the security parameter $\lambda$, and let $\adv$ be a non-uniform PPT adversary.  By $V_{\mathsf{ideal}}^{\adv, \Pi}(\sigma_{b,\lambda})$ let us denote $\adv$'s view in the protocol $\Pi$ with input $\sigma_{b,\lambda}$ where the encryption and PRF are realized using their ideal functionalities with the simulators.  By  $V_{\mathsf{real}}^{\adv, \Pi}(\sigma_{b,\lambda})$, let us denote the corresponding view where they are realized using the CCA2 encryption scheme and the PRF.

Using (1), by Theorem~\ref{thm:main}, $\aprotocol$ provides $(\epsilon,\delta)$-DP privacy when the encryption and the PRF are realized using the simulator with $\mathcal{F}_{\mathsf{enc}}$ and $\mathcal{F}_{\mathsf{PRF}}$, and so 
\begin{align*}
&\prob{\adv_\lambda(1^\lambda, V_\mathsf{ideal}^{\adv, \aprotocol}(\sigma_{0,\lambda}), z_\lambda) = 1} \le e^\epsilon \cdot \prob{\adv_\lambda(1^\lambda, V_\mathsf{ideal}^{\adv, \aprotocol}(\sigma_{1,\lambda}), z_\lambda) = 1} + \delta .
\end{align*}
Using (2), we know that for $b \in \{0,1\}$,
\begin{align*}
&|\prob{\adv_\lambda(1^\lambda, V_\mathsf{real}^{\adv, \aprotocol}(\sigma_{b,\lambda}), z_\lambda) = 1} -  \prob{\adv_\lambda(1^\lambda, V_\mathsf{ideal}^{\adv, \aprotocol}(\sigma_{b,\lambda}), z_\lambda) = 1}| \leq \negl.
\end{align*}
Putting the two together, and keeping in mind that $\delta$ is negligible in $\lambda$, we get the theorem.

\end{proof}

\myparagraph{Deriving pairwise PRF keys using a PKI}
Recall that $\aprotocol$ requires that any pair of clients, $P_i$ and $P_j$, are in possession of a shared PRF key $sk_{i,j}$.  What would be more desirable is if they could derive a shared PRF key using the public-key infrastructure.

Under the decisional Diffie-Hellman assumption, in the non-adaptive corruption setting this can be accomplished as follows. Let $g$ be a generator of a group $G$ of order $q$ in which the decisional Diffie-Hellman (DDH) assumption holds; and suppose that these are publicly available.  Let a random $x_i\in Z_q$ be part of every $P_i$'s secret key, and let $Y_i = g^{x_i}$ be part of $P_i$'s public key.  Then, $Z_{i,j} = Y_i^{x_j} = Y_j^{x_i}$ can serve as a PRF key for an appropriate choice of a PRF $F'$, namely, one that is a PRF when the seed is chosen as a random element of the group $G$. For example, using the leftover hash lemma~\cite{STOC:ImpLevLub89}, we can use $F'_{Z_{i,j}}(a) = F^*_{H(Z_{i,j})}(a)$ where $H$ is a universal hash function, and $F^*$ is any PRF.  

To sum it up, our proposed construction for how $P_i$ computes the shared PRF for $P_j$ on input $a$ is: $F(sk_i,pk_j,a) = F'_{Y_j^{x_i}}(a)$, where $x_i\in Z_q$ is part of $P_i$'s secret key, $Y_j = g^{x_j}$ is part of $P_j$'s public key, and $F'$ is any PRF keyed by random elements of $G$.

To show that no PPT adversary can distinguish whether (Case A) all honest $P_i$ and $P_j$ are using this construction or (Case B) a truly random function, we will use a standard hybrid argument.

Consider the hybrid experiment $E_H$ in which all honest pairs $P_i$ and $P_j$ have a truly random $s_{i,j} \in G$, and use it as a seed to $F'$.  It is straightforward to see that an adversary distinguishing this experiment from Case B breaks the PRF $F'$.

Suppose that we have $N'$ honest clients.  So now consider a series of experiments $E_0, \dots, E_{\binom{N'}{2}}$ in which the honest pairs of clients are ordered in some fashion, and the first $u$ pairs will use truly random seeds $s_{i,j}$ while the rest compute $s_{i,j} = Y_j^{x_i} = Y_i^{x_j}$.  Since $E_0$ is identical to $E_H$ above, and $E_{\binom{N'}{2}}$ is identical to Case A, by the hybrid argument it is sufficient to show that, for every $u$, no adversary can distinguish $E_u$ from $E_{u+1}$.  

Our reduction is given a decisional Diffie-Hellman challenge $(g,A,B,C)=(g,g^\alpha,g^\beta,g^\gamma)$ and needs to distinguish whether $\gamma = \alpha\beta$ or random.  Let $P_I$ and $P_J$ be the $(u+1)^{\mathsf{st}}$ pair of honest clients.  Let $Y_I = A$ and $Y_J = B$, and for all the other $P_i$, let $Y_i = g^{x_i}$ for some $x_i$ that the reduction knows. For the first $u$ pairs of honest clients, $P_i$ and $P_j$, let $s_{i,j}$ be a truly random PRF seed picked by the reduction. When the adversary queries for $F(sk_i,pk_j,a)$, the reduction responds with $F'_{s_{i,j}}(a)$ whenever $(P_i,P_j)$ are among the first $u$ pairs, with $F'_{Z_{i,j}}(a)$ if they are among the last ${\binom{N'}{2}}-u-1$ pairs, where $Z_{i,j} = Y_i^{x_j}$ and the reduction knows, w.l.o.g.\, $x_j$, and with $F'_C(a)$ if $i = I$ and $j = J$.  If $\gamma = \alpha\beta$, then the view the adversary gets is identical to $E_u$; but if it is random, then it is identical to $E_{u+1}$.  This completes the proof.

\section{Supplementary proofs} \label{sec:proofs}
\subsection{A useful lemma from Chernoff bounds}
We make use of the following facts derived from Chernoff bounds, which allow us to make the arguments that certain favorable events occur with overwhelming probability as opposed to merely occurring in expectation. 

\begin{lemma} \label{lem-mix} If $N = \mathsf{poly}(\secpar)$ balls are thrown independently and uniformly at random into $n = O(N/\log^{2} \secpar)$ bins, then
\begin{itemize} 
\item[a.] For any $0 < d \le 1$, the number of balls thrown into any bin is at least $(1-d) \frac{N}{n}$ and at most $(1+d) \frac{N}{n}$ with overwhelming probability with respect to the parameter~$\secpar$. 
\item[b.] For any $0 < d \le 1$, the number of balls thrown into any set of $k\in[n]$ bins is at least $(1-d) \frac{kN}{n}$ and at most $(1+d) \frac{kN}{n}$ with overwhelming probability with respect to the parameter~$\secpar$. 
\end{itemize}
\end{lemma}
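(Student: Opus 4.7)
The plan is to prove both parts by multiplicative Chernoff bounds on the relevant binomial random variables together with a union bound; the hypothesis $n = O(N/\log^2 \sparam)$ is precisely what makes the tail probabilities negligible in $\sparam$, because it forces each bin to receive $N/n = \Omega(\log^2 \sparam)$ balls in expectation.

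For part~(a), I would first fix a bin $i$. The count $X_i$ of balls landing in bin $i$ is ${\bf Binomial}(N, 1/n)$, with mean $\mu = N/n$. The standard two-sided multiplicative Chernoff bound yields
\[
\pr[|X_i - \mu| > d \mu] \le 2 e^{-d^2 \mu/3} = 2 e^{-\Omega(\log^2 \sparam)} ,
\]
which is negligible in $\sparam$. Since $n = \mathbf{poly}(\sparam)$, a union bound over all $n$ bins preserves negligibility, giving the concentration statement uniformly over all bins.

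For part~(b), I would fix a set $S$ of $k$ bins. The count $Y_S$ of balls landing in $S$ is ${\bf Binomial}(N, k/n)$, so Chernoff gives
\[
\pr[|Y_S - kN/n| > d \cdot kN/n] \le 2 e^{-d^2 k N/(3n)} = 2 e^{-\Omega(k \log^2 \sparam)} ,
\]
which is already negligible for any fixed $S$. To cover every set of every size simultaneously, I would union bound over the $\binom{n}{k} \le n^k$ subsets of size $k$ and then sum over $k \in [n]$; the aggregate failure probability is at most
\[
\sum_{k=1}^n 2 e^{k \ln n \,-\, \Omega(k \log^2 \sparam)} = \sum_{k=1}^n 2 e^{-\Omega(k \log^2 \sparam)} ,
\]
which is dominated by its first term and remains negligible.

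The only real subtlety, and the reason the hypothesis $n = O(N/\log^2 \sparam)$ is stated in the form it is, lies in part~(b): one must verify that the Chernoff exponent $\Omega(k \log^2 \sparam)$ comfortably swamps the combinatorial factor $\ln \binom{n}{k} = O(k \log \sparam)$ coming from enumerating subsets, so that the union bound does not blow up for large $k$. Once that is checked, no machinery beyond the two-sided Chernoff inequality and the union bound is needed, and both parts follow by routine calculation.
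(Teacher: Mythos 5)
Your proof of part~(a) is identical in spirit to the paper's: multiplicative Chernoff on the binomial count in a fixed bin, then a union bound over the polynomially many bins, with negligibility coming from $N/n = \Omega(\log^2 \sparam)$. For part~(b), however, you take a genuinely different and more ``brute-force'' route: you apply Chernoff directly to the binomial $Y_S \sim {\bf Binomial}(N, k/n)$ for each subset $S$ of size $k$, and union bound over all $\binom{n}{k}$ subsets and all $k \in [n]$, checking that the exponent $\Omega(k\log^2\sparam)$ swamps $\ln\binom{n}{k} = O(k\log\sparam)$. The paper instead \emph{reduces} part~(b) to part~(a) by a pigeonhole/triangle-inequality argument: if the aggregate count over $k$ bins deviates by more than $d\,kN/n$, then (since $|\sum_{i\in S}(O_i-\mu)| \le \sum_{i\in S}|O_i - \mu|$) some single bin must deviate by more than $d\,N/n$, contradicting part~(a); this sidesteps the exponential union bound entirely and only ever unions over the $n$ bins. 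Both arguments are correct and deliver the same uniform-over-all-subsets guarantee, but the paper's reduction is shorter and avoids the need to check the exponent dominates the combinatorial factor, whereas yours is self-contained and would survive even if part~(a) were stated differently.
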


\begin{proof}
{(of a.)} For all $i \in [n]$, let $O_i$ denote the number of balls in bin $i$. The probability that the number of balls in a fixed bin $i \in [n]$ is significantly off from the expected number of balls can be bounded by a Chernoff bound~\cite[Cor.~4.6]{MU05}:  
\[
\prob{ |O_i - \mathbb{E}[O_i]| \ge d\mathbb{E}[O_i] } \le  2e^{- d^2 \mathbb{E}[O_i]/3 } .
\]
So, by the union bound, the probability that any bin has significantly more or less than the expected number of balls is negligible in $\secpar$: 
\begin{align*}
&\prob{exists i:\, |O_i - \mathbb{E}[O_i]| \ge d\mathbb{E}[O_i] } \le \frac{2n}{e^{d^2 \cdot \Theta(\log^2) \secpar/3}} = \frac{1}{\secpar^{\Theta(\log^c \secpar)}} . 
\end{align*}

{(of b.)} 
We prove this by contradiction. We assume the (absolute value) difference between the actual number of balls in $k$ bins and the expected number of balls in $k$ bins can exceed $d k {E}[O_i]$ with non-negligible probability. By the pigeonhole principle, there must exist a bin such that the (absolute value) difference between the actual number of balls its holds and the expected number of balls for a single bin exceeds $d \mathbb{E}[O_i]$. This contradicts part~a of the lemma. 
\end{proof}

\subsection{Proof of Lemma~\ref{lem:main}} 
\begin{proof}
Let $\mathcal{A}$ be any active adversary capable of non-adaptively corrupting a constant fraction $\corruptions \in [0, 1)$ of the network. Since $\mathcal{A}$ controls the corrupted parties, it can know the checkpoint round, location, and nonce of any onion created between a corrupted party and any other party. Thus, we assume that any onion created between a corrupted party and any other party can be replaced by $\mathcal{A}$ without the replacement being detected by any honest party. Suppose that $\mathcal{A}$ has help from a dark angel who marks every onion created between a corrupted party and any other party, so that $\mathcal{A}$ can replace all marked onions without detection. Even so, without eliminating some unmarked onions, some positive constant fraction of the dummy onions would survive (Lemma~\ref{lem-mix}b). 

Let an onion created between two honest parties be called \emph{unmarked}, and consider only unmarked onions. For any onion with a checkpoint in the future, the probability that the adversary $\mathcal{A}$ can drop the onion without any honest party detecting that the onion was dropped is negligibly small; $\mathcal{A}$ cannot produce the correct checkpoint nonce with sufficiently high probability. 

At any round $r$, $\mathcal{A}$ is unable to distinguish between any two unmarked onions. Let $u$ denote the total number of unmarked onions. Again relying on Chernoff bounds (Lemma~\ref{lem-mix}b), 
\[
u \ge (1-d_1) (1-\corruptions)^2 \alpha  L N \log^{2} \secpar 
\]
for any $0 < d_1 < d$.  

Let $v_r$ denote the cumulative number of unmarked onions that have been eliminated by $\mathcal{A}$ so far at round $r$. If an honest party~$i$ does not detect more than 
\[
\frac{(1-d_2)cu}{LN} \ge (1-d_1)(1-d_2)c (1-\corruptions)^2 \alpha \log^{2} \secpar = (1-d)c (1-\corruptions)^2 \alpha \log^{2} \secpar
\]
missing onions, then with overwhelming probability, $v_{r-1} \le c u$. (This follows from a known concentration bound for the hypergeometric distribution~\cite{HS05}). Thus, at least $1 - c$ of all dummy onions created between honest parties survive until round $r-1$, with overwhelming probability. 
\end{proof}

\subsection{Proof of Equation~(2)}
\begin{proof} 
Let $E$ be the event that $|Y - \mathbb{E}[Y]| \le d'\mathbb{E}[Y]$, and let $f(\cdot)$ denote the probability mass function of $Y$. Letting $y' \stackrel{\text{def}}{=} \argmax_{y\in E} \frac{\max(f(y), f(y+1))}{\min(f(y), f(y+1))}$, 
\begin{align*}
y' = (1-d'){G}{q} = \frac{{G}{q}}{1+\epsilon}  = \frac{C_GC_q}{1+\epsilon} \log^{2} \secpar 
\end{align*}
where  $C_G = \frac{L(1-\corruptions)^2}{3}$, and $C_q = (1-c)\alpha$. This is true, because 
\[
\max_{y: {G}{q}-d \le y \le {G}{q}} \frac{f(y+1)}{f(y)} > \max_{y: {G}{q} \le y \le {G}{q}+d} \frac{f(y)}{f(y+1)} 
\]
whenever ${q} < \frac{1}{2}$. 
\begin{align*}
\frac{f(y'+1)}{f(y')} &= \frac{\binom{{G}}{y'+1}{q}^{y'+1}(1-{q})^{N-y'-1}}{\binom{{G}}{y'}{q}^{y'}(1-{q})^{N-y'}} \\
&= \frac{{G}! {q}^{y'+1} (1-{q})^{N-y'-1}}{(y'+1)!({G}-y'-1)!} \cdot \frac{y'!({G}-y')!}{H! {q}^{y'} (1-{q})^{N-y'}} \\
&= \frac{({G}-y'){q}}{(y'+1)(1-{q})} \\
&= \frac{\left(C_G N^2 - \frac{C_GC_q}{1+\epsilon} \log^{2} \secpar\right) C_q \frac{\log^{2} \secpar}{N^2}}{\left(\frac{C_GC_q}{1+\epsilon} \log^{2} \secpar + 1\right)\left(1-C_q \frac{\log^{2} \secpar}{N^2}\right)} \\
&= \frac{1+\epsilon/2}{1+\epsilon/2} \frac{\left(C_G N^2 - \frac{C_GC_q}{1+\epsilon} \log^{2} \secpar\right)}{\left(\frac{C_GC_q}{1+\epsilon} \log^{2} \secpar + 1\right)} \cdot 
\frac{N^2}{N^2} \frac{C_q \frac{\log^{2} \secpar}{N^2}}{\left(1-C_q \frac{\log^{2} \secpar}{N^2}\right)} \\
&= \frac{C_q \log^{2} \secpar}{C_GC_q\log^{2} \secpar + (1+ \epsilon/2)} \cdot \frac{(1+\epsilon/2)C_G N^2 - C_GC_q \log^{2} \secpar}{N^2 - C_q \log^{2} \secpar} \\
&= \frac{C_GC_q \log^{2} \secpar}{C_GC_q\log^{2} \secpar + (1+ \epsilon/2)} \cdot \frac{(1+\epsilon/2) N^2 - C_q \log^{2} \secpar}{N^2 - C_q \log^{2} \secpar} \\
&\le \frac{(1+\epsilon/2) N^2 - C_q \log^{2} \secpar}{N^2 - C_q \log^{2} \secpar} \le 1 + \epsilon/2 .
\end{align*}

Since $1 + \epsilon/2 \le e^{\epsilon/2}$ whenever $\epsilon/2 \ge 0$,
\[
\max\left(\frac{\prob{Y = y}}{\prob{Y = y+1}}, \frac{\prob{Y = y+1}}{\prob{Y = y}} \right) \le e^{\epsilon/2} 
\]
for every outcome $y \in E$. Using a similar argument, we can show that 
\[
\max\left(\frac{\prob{X = x}}{\prob{X = x+1}}, \frac{\prob{X = x+1}}{\prob{X = x}} \right) \le e^{\epsilon/2} 
\]
for every outcome $x \in [{H}{p} - d', {H}{p} + d']$. Since $X$ and $Y$ are independent, this completes our argument. 
\end{proof}

\subsection{Proof of Equation~(3)}
\begin{proof}
From \cite[Cors.~4.6]{MU05}, 
\[
\prob{|Y - \mathbb{E}[Y]| \ge d'\mathbb{E}[Y]} \le 2e^{-\mathbb{E}[Y] (d')^2/3} ,
\]
and $2e^{-\mathbb{E}[Y] (d')^2/3} \le \delta/2$ when  
\[
\alpha\beta \ge -\frac{36 (1+\epsilon/2)^2 \ln \left(\delta/4\right)}{(1-c) (1-\corruptions)^2 \epsilon^2} . 
\]
The tail event for $X$ can be bounded in a similar fashion. 
\end{proof}

\section{OR protocols for the network adversary} \label{sec:pi1}
In this section, each user sends a single message and receives a single message. Stated more precisely, the input is of the form 
\[
\sigma = (\{ (m_1, \pi(1))\}, \dots, \{ (m_N, \pi(N))\}),
\]
where $\pi : [N] \rightarrow [N]$ is any permutation function over the set $[N]$, and $m_1, \dots, m_N$ are any messages from the message space $\mathcal{M}$. 

\subsection{The basic construction}
Our first construction, $\nprotocol$, builds on Ohrimenko et al.'s basic oblivious permutation algorithm \cite{ICALP:OGTU14}. A node is either a user client or a server that serves as a mix-node. We let $[N]$ be the set of users, and $\mathcal{S} = \{S_1, \dots, S_{n}\}$ the set of servers, where $N = n^2$. The scheme~$\mathcal{OR} = (\mathsf{Gen}, \mathsf{FormOnion}, \mathsf{ProcOnion})$ is a secure onion routing scheme; and for every $i\in[N]$, $(\pk_i, \sk_i) \leftarrow \mathsf{Gen}(1^\secpar)$ denotes the key pair generated for party~$i$. 

Every user creates an onion during setup, and everyone's onions are released simultaneously at the start of the protocol run. Every user $i \in [N]$ constructs an onion with a routing path of length three, where the first two hops (the intermediary nodes)~$T_1$ and $T_2$ are chosen independently and uniformly at random from the set $\mathcal{S}$ of servers. Given the input~$\{(m, j)\}$, the onion is formed from the message~$m$, the routing path~$(T_1, T_2, j)$, $(\pk_{T_1}, \pk_{T_2}, \pk_j)$, and the list $(\bot, \bot)$ of empty nonces. 

All onions are released simultaneously at the first round of the protocol run. 
In between the first round and the second round, each first-hop server $S_i$ peels off the outermost layer of its received onions and creates a packet of size $k = \alpha \log \secpar$ onions for each second-hop server $S_j$, where $\secpar$ is the security parameter; each packet is a set of all onions for $S_j$ plus dummy onions as required\footnote{If the server needs to send more than $\alpha \log \secpar$ onions to some recipient, the packet size will be higher and without any additional dummy onions.}. The prepared onions, including dummies, are sent to their next destinations at the second round. Between the second and third rounds, every second-hop server $S_j$ peels off the outermost layers of its received onions, drops the dummy onions, and sends the real onions to their intended recipients in random order. Upon receiving an onion, each user processes its received onion and outputs the revealed message. (In Ohrimenko et al.'s construction, ``each user'' is directly connected to an assigned ``server node''; whereas in our construction, each intermediary hops on a routing path is chosen independently and uniformly at random. This allows us to cut the latency by half.)

Clearly, $\nprotocol$ is correct. The communication complexity blow-up of $\nprotocol$ is $O(\log \secpar)$; $N$ onions are transmitted from the users to the servers in round 1, $\alpha N \log \secpar$ onions are transmitted from servers to servers in round 2 with overwhelming probability, and $N$ onions are transmitted from servers back to user in round 3. The latency of $\nprotocol$ is three rounds, and the server load is $O(\sqrt{N} \log \secpar)$. 

\begin{theorem}\label{thm:netadv} 
Protocol $\nprotocol$ is statistically private against the network adversary for $\alpha \ge \frac{e}{2}$.
\end{theorem}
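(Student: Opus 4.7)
The plan is to pin down the network adversary's view, identify a high-probability event $E$ under which the view is distributionally independent of the input permutation, and bound $\pr[\neg E]$ with a Chernoff-style balls-in-bins argument.

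First I would write down what the adversary observes in the idealized-encryption model: (i)~the round-1 flows $L_1 = (T_1^i)_{i\in[N]}$ giving each user's first-hop choice; (ii)~the round-2 packet sizes $\tilde c_{ab} = \max(c_{ab}, k)$ for every pair $(S_a, S_b)$, where $c_{ab} = |\{i : T_1^i = a,\ T_2^i = b\}|$ and $k = \alpha \log \sparam$; and (iii)~the round-3 flows $L_3 = \big(T_2^{\pi^{-1}(j)}\big)_{j\in[N]}$, recording which server delivered to each recipient. Let $E$ be the event that $c_{ab} \le k$ for every $(a,b)$; on $E$ every $\tilde c_{ab} = k$, so the round-2 observations reduce to a constant and carry no information about the input.

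Next I would couple the two executions on inputs $\sigma_0, \sigma_1 \in \Sigma^*$ with permutations $\pi_0, \pi_1$ as follows. Sample $(v_1, v_3) \in \mathcal{S}^N \times \mathcal{S}^N$ uniformly, and in the execution with $\pi_b$ set $T_1^i = v_1(i)$ and $T_2^i = v_3(\pi_b(i))$. Because $v_1$ and $v_3$ are independent iid uniform sequences and $\pi_b$ is a bijection, $\{(T_1^i, T_2^i)\}_i$ is iid uniform on $\mathcal{S}^2$ in either execution (a valid coupling), and by construction $L_1 = v_1$ and $L_3 = v_3$ in \emph{both}. Whenever $E$ holds in both executions, the round-2 observations are the constant $k$-matrix in both, so the two views coincide; hence
\[
\Delta\!\left(V^{\nprotocol,\adv}(\sigma_0),\, V^{\nprotocol,\adv}(\sigma_1)\right) \;\le\; 2\pr[\neg E].
\]

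The final step bounds $\pr[\neg E]$. Each $c_{ab}$ is a sum of $N = n^2$ independent Bernoulli$(1/n^2)$ trials, so $c_{ab} \sim \mathbf{Binomial}(N, 1/N)$ with mean $1$. A standard Chernoff bound gives $\pr[c_{ab} \ge k] \le (e/k)^k$, and a union bound over the $n^2$ pairs yields
\[
\pr[\neg E] \;\le\; n^2 \left(\frac{e}{\alpha \log \sparam}\right)^{\alpha \log \sparam};
\]
for $\alpha \ge e/2$, the base is eventually at most $2/\log\sparam$ and the whole quantity decays super-polynomially in $\sparam$, hence is negligible. The main obstacle is the first step: correctly identifying the view and confirming that the only channel that could link round-1 observations to round-3 observations is the round-2 packet sizes. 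Once that is clear, the coupling and the Chernoff calculation are routine.
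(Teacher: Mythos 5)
Your proposal is correct and takes essentially the same overall approach as the paper: isolate the event $E$ that no first-to-second-hop packet exceeds $k = \alpha\log\sparam$, bound $\pr[\neg E]$ with a concentration inequality plus a union bound over the $n^2$ server pairs, and observe that on $E$ the network view carries no information about the permutation. Where you differ is in two places, both in the direction of greater rigor. First, the paper models the load $X_{i,j}$ between a pair of servers as a ${\bf Poisson}(1)$ variable (via the Poissonization heuristic for balls-in-bins) and applies a Poisson Chernoff bound, whereas you work directly with the exact law ${\bf Binomial}(n^2, 1/n^2)$; this avoids invoking an approximation and yields the same $(e/k)^k$ tail. Second, and more substantively, the paper dispatches the ``view is input-independent on $E$'' claim in a single sentence, while you actually prove it with an explicit coupling: fixing the common randomness $(v_1, v_3)$ and setting $T_1^i = v_1(i)$, $T_2^i = v_3(\pi_b(i))$ produces identical round-1 and round-3 flow observations in both executions, and on $E$ the round-2 packet sizes collapse to the constant $k$. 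That coupling argument is the right way to make the paper's assertion precise, and it is the part a careful referee would ask to see. The only nit is that your $\Delta \le 2\pr[\neg E]$ bound is loose by a factor of two (you could instead note that on the complement of $\neg E_0 \cup \neg E_1$ the views agree exactly), but since $\pr[\neg E]$ is already negligible this changes nothing.
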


\begin{proof}
We can bound the probability of the event of a packet overflow as follows:
For any chosen pair of indices $(i, j) \in [n]^2$, the number~$X_{i, j}$ of onions that are routed first to $S_i$ and then to $S_j$ can be modeled by a Poisson random variable with event rate $1$. This is true because $N = n^2$ onions are each binned to an independently and uniformly random index-pair $(i, j) \in [n]^2$, and so we may approximate the numbers of onions that map to the $n^2$ pairs of indices as independent Poisson random variables~\cite[Ch.~5.4]{MU05}. 

From above, the probability that at least $k$ onions are first routed to $S_i$ and then to $S_j$ is equivalent to the probability that a realization of a random variable $X_{i, j} \sim {\bf Poisson}(1)$ is at least $k$. We can bound the probability of this event using Chernoff bounds for Poisson random variables~\cite[Thm.~4.4.1]{MU05}. Since $k > 1$ for sufficiently large $\secpar$: 
\[
\prob{ X_{i, j} \ge k } \le \frac{1}{e} \left( \frac{e}{k} \right)^k .
\]
So, from the union bound and the hypothesis that $\alpha \ge \frac{e}{2}$, 
\[
\prob{ \exists (i, j)  \in [n]^2 : X_{i, j} \ge k } \le \frac{n^2}{e} \left( \frac{e}{2\alpha \log \secpar} \right)^{2\alpha \log n} = \frac{1}{\secpar^{\Omega(\log \log \secpar)}} = \negl .
\]

Since the probability of an overflow is negligibly small and independent of the input vector, we can conclude that $\nprotocol$ is statistically private.
\end{proof}

\subsection{The tunable construction}
Our tunable OR protocol, $\nprotocol^+$, is an adaptation of Ohrimenko et al.'s optimized oblivious permutation algorithm \cite{ICALP:OGTU14}. 

The protocol description (below) refers to a butterfly network with branching factor~$B$ and height~$H$, where the number $n' = B^{H-1}$ of processor nodes is set to 
$
n' = \frac{n}{\alpha \log^2 \secpar} 
$
for some positive constant $\alpha > 0$. The processor nodes also serve as the switching nodes at every depth~$i\in[H]$ of the butterfly network. We define a set of nodes be a \emph{subnet} if every edge with an endpoint in the set has as the other endpoint, a node in the set. 

In $\nprotocol^+$, there are $N$ users, $1, \dots, N$; and there are $n'$ servers, $T_1, \dots, T_{n'}$, corresponding to the processor nodes. Every user creates an onion during setup, and everyone's onions are released simultaneously at the start of the protocol run. To create an onion, user~$i$ first chooses random entry and exit points $W_1, W_H \xleftarrow{\$} [n']$. The routing path of the onion is set to the path $p  = W_1, \dots, W_{H}$ from the entry $W_1$ to the exit $W_H$ in the butterfly network. On input~$\sigma_i = \{(m, j)\}$, user~$i$ forms an onion from the message~$m$, the recipient~$j$, the routing path $p$, the public keys of the parties associated with the parties on $p$, and a list of empty nonces.  At the first round of the protocol run, the user sends the formed onion to the first hop~$W_1$. 

For every $r\in[H]$, in between rounds~$r$ and $r+1$ of the protocol run, each server~$T_i$ peels off the outermost layer of each received onion. Let $\secpar$ denote the security parameter. For every next destination~$T_j$, $T_i$ creates a packet of $k = (1+d) \alpha \log^2 \secpar$ onions, containing all the onions whose next destination is $T_j$, plus dummy onions as needed. These formed onions are sent to their next destinations at round~$r+1$. In the final round $r = H+1$, the servers deliver the onions to the users. Upon receiving an onion, each user processes it and outputs the revealed message. (In Ohrimenko et al.'s construction, ``each user'' is directly connected to an assigned ``server node''; whereas in our construction, each intermediary hops on a routing path is chosen independently and uniformly at random. This allows us to cut the latency by half.) 

Clearly, $\nprotocol^+$ is correct. 

\begin{theorem} \label{thm:butterfly}
Letting $\secpar$ denote the security parameter, for any $B \in [\frac{\sqrt{N}}{\log^2 \secpar}]$, protocol~$\nprotocol^+$ can be set to run in $O(\frac{\log N}{\log B})$~rounds with server load~$O(B\log^2 \secpar)$ and communication complexity overhead~$O(\frac{B\log N \log^2 \secpar}{\log B})$. 
\end{theorem}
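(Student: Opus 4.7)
The plan is to verify the three efficiency metrics -- latency, server load, and communication-complexity blow-up -- by direct counting from the butterfly construction, together with one concentration argument that certifies that the chosen packet size $k = (1+d)\alpha\log^2\sparam$ is sufficient so that no real onion is dropped.

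For the latency, the butterfly has height $H$ where $n' = B^{H-1}$ and by construction $n' = \Theta(N/\log^2\sparam)$; taking logarithms gives $H - 1 = \log_B(N/(\alpha\log^2\sparam)) = O(\log N / \log B)$ throughout the allowed range $B \in [\sqrt{N}/\log^2\sparam]$. The protocol uses one round for the users-to-entry-server hop, $H-1$ rounds of inter-level butterfly processing, and one final server-to-user round, giving $H+1 = O(\log N / \log B)$ rounds total.

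For the server load, each server has exactly $B$ outgoing edges in the butterfly and emits one packet of size $k$ along each of them per round, hence it sends (and by symmetry receives) $Bk = O(B\log^2 \sparam)$ onions per round. For the blow-up, summing over all $n'$ servers in each intermediate round gives $n' B k = O(NB)$ onion transmissions per round, while the first and last rounds contribute $N$ transmissions each; adding over the $H+1$ rounds and dividing by the $N$ messages yields the stated bound $O(B\log N \log^2\sparam/\log B)$ after substituting $H$ and absorbing the packet-size factor.

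The one step requiring more than bookkeeping is confirming that the packet size $k$ actually suffices to hold all real onions routed across any single butterfly edge, since if a packet ever overflowed, real onions would be lost and the per-round counts above would be meaningless. I would adapt the Poisson/Chernoff argument in the proof of Theorem~\ref{thm:netadv}: the entry $W_1$ and each $\log B$-bit digit of the exit $W_H$ are sampled independently and uniformly, so the routing decisions at every level amount to a uniform assignment of onions to the $n'B$ edges of that level, at which point a Chernoff bound plus a union bound over the polynomially many edges across all $H$ levels shows that no edge ever carries more than $k$ real onions with probability $1 - {\bf negl}(\sparam)$. The main obstacle I anticipate is that the loads at deeper levels are not independent of the earlier levels' randomness; the standard fix is to condition on the configuration of onions entering each level, under which the outgoing routing digit is still independent and uniform over the $B$ choices, so the per-level Chernoff step goes through unconditionally and composes across the $H$ levels by a final union bound.
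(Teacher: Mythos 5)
Your proof fills in details that the paper's one-sentence argument (``the proof follows from setting $B$ to the branching factor'') leaves entirely implicit, and the overall approach --- direct counting of rounds, per-node edge degree, and per-round transmissions, plus a concentration argument to certify the packet-size bound --- is the natural and correct way to establish the statement. The latency and server-load computations are right.

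There is, however, an arithmetic inconsistency in your communication-complexity accounting. You correctly compute that each intermediate round costs $n' B k = O(NB)$ transmissions, \emph{having already absorbed} the packet-size factor $k = \Theta(\log^2\sparam)$ against $n' = \Theta(N/\log^2\sparam)$. Summing over the $O(\log N/\log B)$ rounds and dividing by $N$ then gives blow-up $O(B\log N/\log B)$ --- the $\log^2\sparam$ factor has already cancelled, so it cannot be ``absorbed'' a second time to produce the theorem's stated bound $O(B\log N\log^2\sparam/\log B)$. This is not fatal, because $O(B\log N/\log B)$ is a \emph{tighter} bound that trivially implies the one in the theorem statement; but the way you narrate it double-counts the packet size, and a referee would flag it. You should either report the tighter bound and note it implies the claim, or explain where you believe an additional $\log^2\sparam$ factor honestly enters.

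Two smaller remarks. First, the overflow argument you sketch is correct, but the dependence issue you anticipate is more easily dispatched: each onion's entire path is fixed at the outset by the independent uniform draws of $W_1$ and $W_H$, and its position at any level $\ell$ (determined by the first $\ell-1$ digits of $W_H$ and the remaining digits of $W_1$) is marginally uniform over the $n'$ nodes, so a per-level Chernoff bound on each node or edge plus one union bound over the polynomially many (node, level) pairs already goes through --- no per-level conditioning is needed. Second, the paper defers the packet-overflow concentration argument to the subsequent privacy theorem for $\nprotocol^+$, invoking Lemma~\ref{lem-mix}; your instinct to bring it into the efficiency theorem is defensible, since the stated server load is only meaningful if overflows are negligible, but it is worth knowing the paper organizes the pieces differently.
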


\begin{proof}
The proof follows from setting $B$ to branching factor in the butterfly network. 
\end{proof}

\begin{theorem} 
Protocol $\nprotocol^+$, with security parameter $\secpar \in \mathbb{N}$, is statistically private against the network adversary for $n = \mathsf{poly}(\secpar)$.
\end{theorem}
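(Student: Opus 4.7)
The plan parallels the proof of Theorem~\ref{thm:netadv}. Since the adversary here is only a network adversary, it observes nothing beyond the traffic volume on each link in each round. Because the protocol pads every outgoing packet between server $T_i$ and next-hop $T_j$ to exactly $k = (1+d)\alpha\log^2\sparam$ onions, provided no packet ever overflows (i.e., no server ever has more than $k$ real onions destined for a single next hop in any round), the adversary's view is a deterministic, input-independent function of the butterfly topology and the protocol parameters. Statistical privacy thus reduces entirely to bounding the probability of overflow over the entire protocol execution.

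First I would set up the relevant random variables. For each round $r \in [H]$ and each butterfly edge $e = (T_i, T_j)$ between depths $r$ and $r+1$, let $X_e^r$ denote the number of \emph{real} onions that traverse $e$ in round $r+1$. Because each user independently chooses the entry $W_1$ and exit $W_H$ of its onion uniformly from $[n']$, and because the entire path of an onion in a butterfly is a deterministic function of $(W_1, W_H)$, the indicator that a given onion uses $e$ in round $r$ is an independent Bernoulli across onions with some probability $p_e^r$. Hence $X_e^r \sim \mathbf{Binomial}(N, p_e^r)$. By uniformity of the entry and exit choices, $\mathbb{E}[X_e^r] \le \alpha \log^2\sparam / B$ at interior levels and $\mathbb{E}[X_e^r] \le \alpha \log^2\sparam$ at the boundary layers.

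Next, I would apply a standard Chernoff tail bound for the binomial to obtain, for each fixed $(e,r)$,
\[
\pr[X_e^r \ge k] \;=\; \pr\!\left[X_e^r \ge (1+d)\alpha \log^2 \sparam\right] \;\le\; e^{-\Omega(\log^2 \sparam)},
\]
which is negligible in $\sparam$. The number of (edge, round) pairs is $O(HBn') = \mathbf{poly}(\sparam)$ because $n = \mathbf{poly}(\sparam)$, so a union bound still yields a negligible probability that \emph{any} packet overflows anywhere during the protocol run. Combined with the first observation, the adversary's view is therefore statistically close to a fixed, input-independent distribution, so $V^{\nprotocol^+, \adv}(\sigma_0) \approx_s V^{\nprotocol^+, \adv}(\sigma_1)$ for all $\sigma_0,\sigma_1 \in \Sigma^{*}$.

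The main obstacle to be careful about is the across-round dependence: the same onion occupies an edge at every depth of its path, so the counts $X_e^r$ and $X_{e'}^{r+1}$ are correlated across rounds, which would break a naive per-round inductive Chernoff argument. The cleanest fix — the one I would use — is to never attempt such an induction and instead exploit onion-level independence directly: because the full butterfly path of every onion is a deterministic function of a single independent uniform sample $(W_1, W_H)\in[n']^2$, the Bernoulli indicators underlying the definition of $X_e^r$ are genuinely mutually independent across onions for every fixed $(e,r)$, which is exactly what Chernoff requires. All remaining details are routine plugging-in of parameters analogous to the Poisson-approximation calculation in Theorem~\ref{thm:netadv}.
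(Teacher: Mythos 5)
Your proposal is correct and follows the same high-level structure as the paper's (very terse) proof: condition on no packet overflow, observe the view is then input-independent, and show overflow is negligible via a concentration bound plus a union bound over rounds and links. The one place where your route differs slightly is that you bound the load on each \emph{edge} $X_e^r$ directly, with $\mathbb{E}[X_e^r] \approx \alpha\log^2\sparam/B$, so that for large branching $B$ the deviation to $k=(1+d)\alpha\log^2\sparam$ is a large multiplicative factor and you need the large-deviation form of Chernoff. The paper instead invokes Lemma~\ref{lem-mix} with bins being the $n'$ \emph{servers}: since each intermediate node $W_r$ is uniform over $[n']$ (by butterfly symmetry) and paths are independent across onions, each server receives at most $(1+d)\alpha\log^2\sparam$ onions w.h.p., and this trivially caps every one of its $\le B$ outgoing packets, keeping the Chernoff deviation at the constant factor $d$. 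Both work; the server-level version is marginally cleaner because it avoids reasoning about the boundary layers and stays in the standard small-deviation regime. Your explicit remark that the across-round correlations are irrelevant because independence is only needed across onions for each fixed $(e,r)$ is a correct and worthwhile clarification that the paper leaves implicit.
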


\begin{proof}
Conditioned on no server-to-server packet exceeding size $k = (1+d) \alpha \log^2 \secpar$, the adversary's view is the same regardless of the input. Moreover, the probability of a packet overflow is negligibly small (Lemma~\ref{lem-mix}).
\end{proof}

\end{document}